\documentclass[aps,prl,reprint,groupedaddress]{revtex4-2}

\usepackage{amsfonts, amsmath, amsthm, comment, braket}
\usepackage{tikz}
\usetikzlibrary{arrows.meta,calc}

\usepackage[linktocpage, colorlinks=true, linkcolor={blue}, citecolor={blue}]{hyperref}
\newtheorem{theorem}{Theorem}
\newtheorem{example}{Example}
\newtheorem{corollary}[theorem]{Corollary}

\newtheorem{proposition}[theorem]{Proposition}
\newtheorem{definition}[theorem]{Definition}
\newtheorem*{question}{Question}

\newcommand{\mb}{\mathbb}
\newcommand{\mc}{\mathcal}

\newcommand{\lset}{\left\{ }
\newcommand{\rset}{\right\} }

\newcommand{\lpara}{\left(}
\newcommand{\rpara}{\right)}

\newcommand{\norm}[1]{\left\Vert #1 \right\Vert}

\DeclareMathOperator{\id}{Id}
\DeclareMathOperator{\tr}{tr}

\DeclareMathOperator{\diag}{diag}
\DeclareMathOperator{\rankshift}{rankshift}
\DeclareMathOperator{\rank}{rank}

\begin{document}

\title{The Hidden Nature of Non-Markovianity}

\author{Jihong Cai}
\email[]{jihongc2@illinois.edu}
\affiliation{Department of Mathematics, University of Illinois Urbana-Champaign}

\author{Advith Govindarajan}
\email[]{agovind2@illinois.edu}
\affiliation{Department of Mathematics, University of Illinois Urbana-Champaign}

\author{Marius Junge}
\email[]{mjunge@illinois.edu}
\affiliation{Department of Mathematics, University of Illinois Urbana-Champaign}

\date{\today}

\begin{abstract} 
    The theory of open quantum systems served as a tool to prepare entanglement at the beginning stage of quantum technology and more recently provides an important tool for state preparation. Dynamics given by time dependent Lindbladians are Markovian and lead to decoherence, decay of correlation and convergence to equilibrium. In contrast Non-Markovian evolutions can outperform their Markovian counterparts by enhancing memory. In this letter we compare the trajectories of Markovian and Non-Markovian evolutions starting from a fixed initial value. It turns out that under mild assumptions every trajectory can be obtained from a family of time dependent Lindbladians. Hence Non-Markovianity is invisible if single trajectories are concerned.
\end{abstract}
\maketitle

\paragraph*{Introduction.}
The theory of open quantum systems describes the evolution of quantum devices including interactions with an environment. Open quantum systems have been studied extensively in quantum information \cite{breuer2002, nielsen2010}, control \cite{wiseman2009, dong2010, altafini2012}, and thermodynamics \cite{spohn1978, lindblad2001, vinjanampathy2016}. For completely positive and trace-preserving (CPTP) Markovian dynamics, Gorini, Kossakowski, Sudarshan \cite{gorini1976}, and Lindblad \cite{lindblad1976} identified the general form of the time-local generator of the GKSL master equation
\begin{align}\label{eom}
    \dot\rho_t=L_t\rho_t,
\end{align}
where $L_t$ is a Lindbladian operator. Independently, the same form is derived from the Born-Markov approximation \cite{rivas2012, breuer2002}. Data processing inequality, decoherenence, and decay of correlation are imminent to the Markovian regime. 

Many experimentally relevant processes exhibit memory effects, such as information backflow \cite{liu2011}, non-monotonic decoherence \cite{tang2012}, or revival of entanglement \cite{xu2010}. These observations have motivated sustained efforts to characterize quantum dynamics beyond the Markovian regime. Such effect can be demonstrated in many quantum devices, such as nitrogen-vacancy centers \cite{dong2018, haase2018, wang2018}, photonic systems \cite{xu2010, liu2011, liu2013}, nuclear magnetic resonance \cite{wu2020, khurana2019, chen2022}, trapped ions \cite{gessner2014, li2022}, and on superconducting processors \cite{white2020, gaikwad2024}.

Multiple definitions of quantum Non-Markovianity have been proposed, reflecting different structural and operational perspectives. Foundational work by Wolf and Cirac \cite{wolfcirac2008} and Wolf, Eisert, Cubitt, and Cirac \cite{wolf2008} initiated a systematic analysis of Markovianity at the level of quantum dynamical maps. Despite significant progress, no single definition has achieved universal acceptance (see Refs. \cite{breuer2016, rivas2014} for reviews): CP-divisibility \cite{rivas2010}, information backflow \cite{breuer2009}, mutual information \cite{luo2012}, channel capacities \cite{bylicka2014}, decay-rate negativity \cite{chruscinski2014}, trajectory-based formulations \cite{piilo2008}, tensor network \cite{jorgensen2019, link2024}, and process tensors \cite{pollock2018}. Ongoing work continues to clarify relations between different notions \cite{li2018}. While differing in formulation, these approaches consistently regard Markovianity as a property of the reduced dynamical map or, more generally, of the associated multi-time quantum process.

From a naïve standpoint, only trajectories can be experimentally observed. More precisely, given a CPTP evolution $T_t$, observed trajectories are given by $\rho_t = T_t \rho_0$ for an initial state $\rho_0$. Such a trajectory $(\rho_t)_{t_0 \leq t  < t_1}$ is a prototype of a path of density matrices (mixed states). From individual realizations of dynamics in the form of trajectories ensemble descriptions can be inferred. This has motivated extensive use of trajectory-based methods, including stochastic unravelings and continuous-measurement descriptions \cite{carmichael1993, dalibard1992, wiseman1993, wiseman2009, korotkov1999, katz2008, murch2013}. This leads to the following question: 
\begin{center}
    \textit{Can Non-Markovianity be detected from the trajectory?}
\end{center}

In this Letter we show that under mild assumptions, the answer is a resounding \textit{No}. We prove that Non-Markovianity is not an invariant property of individual quantum trajectories, not even finite ensembles thereof. This is demonstrated on a trajectory induced by the pure dephasing evolution, such as 
$$\rho_t=\frac12\begin{pmatrix}
    1 & \sin(t)\\
    \sin(t) & 1
\end{pmatrix}.$$
More general classes of such example will be discussed later in the Letter.

These results do not alter existing definitions of quantum Non-Markovianity formulated at the level of dynamical maps or multi-time processes. Rather, they establish the fundamental non-identifiability of Markovianity under trajectory-preserving descriptions.

\paragraph*{Preliminaries.}
Let $\mc H$ be a finite-dimensional Hilbert space and denote $\mc D(\mc H)=\lset \rho\in\mb B(\mc H): \rho\geq 0, \tr\rho=1\rset$ the set of density matrices in $\mc H$. A trajectory in
$\mc D(\mc H)$ is a continuous map
$$[t_0,t_1) \to \mc D(\mc H),\quad t \mapsto \rho_t.$$

We call an evolution $T_t$ Markovian if it is given by time-dependent Lindbladians. More precisely, it satisfies a
time-local master equation of the Gorini-Kossakowski-Sudarshan-Lindblad
(GKSL) form Eq. (\ref{eom})
where the generator $L_t$ is a Lindbladian. As shown in the seminal works
of Gorini \emph{et al.} and Lindblad \cite{gorini1976,lindblad1976}, the
most general form of a time-local generator compatible with CPTP maps is
\begin{equation}\label{lin}
\begin{aligned}
    L_t(\rho) &= -i[H_t,\rho] + \sum_i \gamma_i(t) L_{a_i(t)} (\rho),\\
    L_a(\rho) &= a\rho a^\dagger - \tfrac12\{a^\dagger a,\rho\}
\end{aligned}
\end{equation}
where $H_t$ is a (possibly time-dependent) Hamiltonian,
$a_i(t)$ are jump operators, and the rates $\gamma_i(t)\ge 0$. Time-local Lindbladians generate Markovian evolutions $T_t$ via the time-ordered exponential $T_t = \mc T \lset \exp \left(\int_0^t L_s\, ds\right) \rset$ (equivalently solving the differential equation).

\paragraph*{Main Results.}
We show that trajectories generated by Non-Markovian dynamics can always be reproduced by suitably chosen Markovian evolutions. Under mild assumptions, we prove that \textit{any} differentiable path of density matrices $\rho_t$ can be realized as the trajectory of a Markovian dynamical map $T_t$. We apply this result to several classes of trajectories naturally arising from Non-Markovian evolutions, demonstrating that Markovianity cannot be inferred from a single trajectory. We further show that even collections of trajectories may remain indistinguishable in the absence of additional selection criteria.

We impose two mild regularity assumptions on the path of density operators, ensuring well-behaved spectral properties:
\begin{enumerate}
\item[(1)] The eigenspaces of $\rho_t$ depend continuously on $t$
\item[(2)] The set of times at which $\rho_t$ changes rank has no accumulation point.
\end{enumerate}
These conditions exclude pathological spectral behavior and are satisfied by a wide range of Markovian and Non-Markovian evolutions of physical interest.

For a given trajectory $\rho_t$ we call $L_t$ a \emph{Lindbladian lift} for $\rho_t$ if it satisfies the equation of motion
\begin{equation}
    \dot \rho_t = L_t \rho_t. \tag{LL} \label{LL}
\end{equation}
Constructing the corresponding channels amounts to a memory destroying explanation of the path $\rho_t$ as a trajectory of a Markovian evolution. Under these assumptions, we obtain the following result:
\begin{theorem}[Lindbladian Lifting] \label{lifting}
    For any $C^2$ path $\rho_t$ satisfying properties (1) and (2), there exists a continuous Lindbladian lift on the interval $[t_0,t_1)$.
\end{theorem}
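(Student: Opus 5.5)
We construct $L_t$ explicitly from the spectral data of $\rho_t$, splitting $\dot\rho_t$ into a part that moves the eigenbasis and a part that changes the eigenvalues.

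\emph{Rotation part.} By property (1), and since $\rho_t$ is $C^2$, we choose on each interval between rank-change times a continuous eigenbasis. Where eigenvalues are degenerate we use continuous spectral projections instead. We write
\[
\rho_t=U_t D_t U_t^\dagger, \qquad D_t=\diag(\lambda_1(t),\dots,\lambda_d(t)).
\]
Then
\[
\dot\rho_t = -i[H_t,\rho_t] + U_t\dot D_t U_t^\dagger, \qquad H_t = i\dot U_tU_t^\dagger .
\]
So the unitary part of the lift is $-i[H_t,\cdot]$, and it remains to realize the diagonal flow $\dot D_t$ by a dissipator.

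\emph{Population part.} This is a classical problem: find a rate matrix $Q(t)$ with nonnegative off-diagonal entries and zero column sums such that $\dot\lambda = Q\lambda$. In the rotated frame it is realized by jump operators
\[
a_{ij}(t)=U_t\ket{i}\bra{j}U_t^\dagger, \qquad \text{rate } q_{ij}(t)\ge 0 .
\]
One checks that
\[
L_{a_{ij}}(\rho_t)=\lambda_j\bigl(U_t\ket{i}\bra{i}U_t^\dagger-U_t\ket{j}\bra{j}U_t^\dagger\bigr).
\]
For this choice, the rank-one jump terms acting on the diagonal state reproduce exactly the master equation, with no off-diagonal leakage.

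To solve $\dot\lambda=Q\lambda$ with $\sum_i\dot\lambda_i=0$ on the support, we use a ``flow from decreasing to increasing'' construction. Let
\[
P=\{j:\dot\lambda_j<0\},\qquad \Lambda=\sum_{j\in P}\lambda_j, \qquad q_{ij}=\frac{(\dot\lambda_i)_+\,(\dot\lambda_j)_-}{\lambda_j\,S},
\]
where $S=\sum_k(\dot\lambda_k)_+$. One can also choose the symmetric-in-sign formula $q_{ij}=(\dot\lambda_i)_+/\Lambda$ for all $j\in P$, which uses only $\lambda_j>0$ for the sources. Either formula needs $\lambda_j>0$ whenever $\dot\lambda_j<0$. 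This holds away from rank changes, because an eigenvalue that is positive throughout an interval is bounded below on compact subintervals. We then check continuity of $q_{ij}$ in $t$, which follows from $C^2$ regularity since $\dot\lambda$ is $C^1$ where eigenvalues are simple.

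\emph{Main obstacle: rank changes and crossings.} At a rank-change time $s$ some $\lambda_j(s)\to 0$, and $q_{ij}\sim \dot\lambda_j/\lambda_j$ may blow up. This would destroy continuity of the lift. The plan is as follows:
\begin{enumerate}
\item Use property (2) to isolate finitely many such times on each compact subinterval.
\item Near each such time, note that a $C^2$ nonnegative function satisfies $|\dot\lambda|\le C\sqrt{\lambda}$. This alone is not enough, so we modify the construction instead.
\item Route outflow from a vanishing eigenvalue through the Hamiltonian or coherent part where possible, or reparametrize so that $\dot\lambda_j/\lambda_j$ has a finite limit. For instance, when $\lambda_j\sim c(t-s)^2$ the ratio behaves like $2/(t-s)$, which is genuinely singular.
\item Allowing merely piecewise-continuous rates, glue the lifts on consecutive intervals.
\end{enumerate}
I expect this to be the hard part. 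The first thing I would try is to exploit the freedom that jump operators need not be diagonal in the eigenbasis: off-diagonal dissipators $L_a$ with $a$ mixing the vanishing direction with the support can realize $\dot\rho_t$ near $s$ with bounded rates. This would come from a direct linear-algebra argument that the cone $\{L(\rho): L\ \text{Lindbladian}\}$ contains every traceless Hermitian $X$ with $P_{\ker\rho}XP_{\ker\rho}\ge0$, combined with a continuous selection.

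Eigenvalue crossings within the support are handled by working with the continuous spectral projections from property (1) rather than with individual eigenvectors.
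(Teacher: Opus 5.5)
\textbf{There is a genuine gap at the rank-change times, and the singularity you ran into there is intrinsic.} Your constant-rank computations are correct: the commutator identity for $H_t=i\dot U_tU_t^\dagger$, the formula for $L_{a_{ij}}(\rho_t)$, and the first rate formula $q_{ij}=(\dot\lambda_i)_+(\dot\lambda_j)_-/(\lambda_jS)$. The alternative $q_{ij}=(\dot\lambda_i)_+/\Lambda$ does not in general give outflow $(\dot\lambda_j)_-$ from each $j\in P$. At rank changes, however, you only list strategies, and none of them can work, because the blow-up affects every Lindbladian, not just diagonal rates.

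Write any Lindbladian as $L(\rho)=\Phi(\rho)-K\rho-\rho K^\dagger$ with $\Phi(\rho)=\sum_i\gamma_ia_i\rho a_i^\dagger$, $K=iH+\tfrac12G$ and $G=\sum_i\gamma_ia_i^\dagger a_i$. For a continuous family you may use the canonical traceless-jump form, in which $\Phi_t$ and $K_t$ depend linearly on $L_t$, so $G_t$ is locally bounded. If $\rho v=\lambda v$ with $\norm{v}=1$, the Hamiltonian drops out and
\[
\langle v,L(\rho)v\rangle=\langle v,\Phi(\rho)v\rangle-\lambda\langle v,Gv\rangle\ \ge\ -\lambda\norm{G}.
\]
This holds for arbitrary jump operators, off-diagonal ones included. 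Since $\dot\lambda_j=\langle v_j,\dot\rho_tv_j\rangle$, every lift must satisfy $\norm{G_t}\ge(\dot\lambda_j)_-/\lambda_j$. In your example $\lambda_j=c(t-s)^2$ with $t<s$, this lower bound equals $2/|t-s|$. In integrated form: if $\dot V_t=-K_tV_t$, then $V_t^{-1}\rho_t(V_t^{-1})^\dagger$ is Loewner-nondecreasing, so under a continuous lift $\rank\rho_t$ can never drop. Each of your proposed fixes therefore fails:
\begin{itemize}
\item Routing outflow through $H_t$ contributes nothing to $\langle v,L(\rho)v\rangle$.
\item Mixing jump operators do not help.
\item For every $C$, the set $\{L:\ L(\rho_t)=\dot\rho_t,\ \norm{L}\le C\}$ is empty for $t$ close to $s$. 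So there is nothing to select continuously, even though the tangent-cone description at $\rho_s$ itself is fine.
\item Reparametrization only produces a geometric lift of a different trajectory.
\item Piecewise-continuous rates do not remove an unbounded singularity.
\end{itemize}

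\textbf{The statement cannot be proved as written.} The path $\rho_t=\diag(1-t^2,t^2)$ on $[-\tfrac12,\tfrac12)$ is $C^\infty$, has constant eigenspaces and a single rank change. Yet every lift needs $\norm{G_t}\ge2/|t|$ for $t<0$. The paper's $\sin t$ example at $t=\pi/2$ is of the same type. The paper's proof does not supply the missing idea either. Its glued generator $\tilde L_t=L_t\circ E_{\rho_t}+L_{t_0}\circ(\id-E_{\rho_t})$ is asserted to be continuous across the rank-shift time and satisfies $\tilde L_t(\rho_t)=\dot\rho_t$. By the inequality above, it therefore cannot be a Lindbladian near a rank drop.

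A correct version needs an additional hypothesis. The inequality shows that $(\dot\lambda_j)_-\le C\lambda_j$ locally, for each eigenvalue branch, is necessary. This rules out rank drops. It also rules out rank increases with an oscillating vanishing eigenvalue, such as $\lambda(t)=(t-s)^6\bigl(2+\sin\tfrac{1}{t-s}\bigr)$.

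\textbf{There is a secondary issue even for constant rank.} $H_t=i\dot U_tU_t^\dagger$ requires a $C^1$ eigenframe, and property (1) does not provide one. Where nonzero eigenvalues touch, $\dot U_t$ can blow up even though $\rho_t$ is $C^2$ and full rank. On the support you should use a frame-free device instead. Examples are the replacer $\varepsilon_t^{-1}(\mc R_{\rho_t+\varepsilon_t\dot\rho_t}-\id)$ on full-rank stretches, or transporting only the $C^2$ support projection.
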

We refer readers to the supplementary material \cite{supp_material} for the proof of the theorem and the following corollaries.

Since conditions (1) and (2) may be technically involved to verify, it is useful to state two simpler criteria.
\begin{corollary}
    If $\rho_t$ is a $C^k$ path  has no non-zero repeated eigenvalues on $t_0 \leq t < t_1$, then there is a $C^k$ family of time-local Lindbladians for which $\dot \rho_t = L_t \rho_t$ on $t_0 \leq t < t_1$. 
\end{corollary}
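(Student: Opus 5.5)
The plan is to build the lift explicitly from a spectral decomposition, so that the required $C^k$ regularity is visible directly. Since the nonzero eigenvalues of $\rho_t$ are simple, standard perturbation theory for isolated simple eigenvalues (implicit function theorem applied to $\det(\lambda-\rho_t)$, or the Riesz projection $P_j(t)=\frac{1}{2\pi i}\oint (z-\rho_t)^{-1}dz$) shows that each such eigenvalue $\lambda_j(t)$ and its rank-one projection $P_j(t)$ are $C^k$ locally in $t$. The zero eigenspace, possibly degenerate, is handled as a block: its projection $P_0(t)=1-\sum_j P_j(t)$ is $C^k$ as well. Locally I then choose a $C^k$ orthonormal frame $e_j(t)$ of eigenvectors, completed by a $C^k$ frame of $\operatorname{ran}P_0(t)$. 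I fix the gauge $\langle e_j,\dot e_j\rangle=0$ by multiplying by a phase $e^{i\theta_j(t)}$ obtained by integration, which preserves $C^k$ up to the needed order. This gives
$$\rho_t=\sum_j \lambda_j(t)\,|e_j(t)\rangle\langle e_j(t)| .$$

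\emph{Step 1 (basis motion).} Set $H_t:=i\sum_j |\dot e_j(t)\rangle\langle e_j(t)|$, summed over the full frame. Differentiating $\langle e_j,e_k\rangle=\delta_{jk}$ shows that $H_t$ is Hermitian, and one has $\dot e_j=-iH_te_j$. Consequently
$$-i[H_t,\rho_t]=\sum_j\lambda_j\bigl(|\dot e_j\rangle\langle e_j|+|e_j\rangle\langle \dot e_j|\bigr).$$
This is exactly the part of $\dot\rho_t$ coming from the moving eigenbasis, and $H_t$ is $C^{k-1}$; the regularity bookkeeping will need care here.

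\emph{Step 2 (populations).} It remains to produce $\sum_i\dot\lambda_i|e_i\rangle\langle e_i|$ with jump operators $a_{ij}=|e_i\rangle\langle e_j|$ and rates $\gamma_{ij}\ge 0$, for $i\neq j$. On a diagonal state these act as a classical rate equation:
$$\dot\lambda_i=\sum_{j\ne i}\bigl(\gamma_{ij}\lambda_j-\gamma_{ji}\lambda_i\bigr).$$
I would take the explicit, smooth choice $\gamma_{ij}=c_t\lambda_i+\dot\lambda_i$. Using $\sum\lambda_j=1$ and $\sum\dot\lambda_j=0$, a two-line computation gives $(Q\lambda)_i=\dot\lambda_i$ independently of $c_t$. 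Hence $L_t=-i[H_t,\cdot\,]+\sum_{i\ne j}\gamma_{ij}L_{a_{ij}}$ satisfies \eqref{LL}, provided that $c_t\ge \max_i(-\dot\lambda_i/\lambda_i)$ and that $c_t$ is chosen $C^k$. For instance, one can take a smooth majorant of this maximum. Finally, local frames are patched over $[t_0,t_1)$. The construction of $L_t$ depends only on the projections, up to the gauge, so overlaps can be glued with a $C^k$ partition of unity. This is legitimate because convex combinations of Lindbladians satisfying \eqref{LL} again satisfy \eqref{LL}.

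The main obstacle is positivity of the rates where a simple eigenvalue decreases to $0$. Zero eigenvalues contribute no constraint: they carry $\dot\lambda=0$ within the zero block, since $\lambda\ge0$ attains a minimum there. Near such a time, however, $-\dot\lambda_i/\lambda_i$ can blow up; for example $\lambda_i=t^2$ as $t\uparrow 0$ gives $2/|t|$. I would first try to show that this is compatible with the hypotheses in the stated setting. The idea is that an eigenvalue merging into the zero block is a rank change, isolated by (2), and away from it all rates are bounded and $C^k$. Near the merging time I would fall back on Theorem~\ref{lifting} to obtain continuity, and control the blow-up through $\dot\lambda_i^2\le 2\|\ddot\lambda_i\|_\infty\lambda_i$. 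If the rates genuinely cannot be kept bounded, the corollary should be read on intervals where the rank is constant, and I would concentrate the argument there.
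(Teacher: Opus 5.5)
Your construction works on any subinterval where $\rank\rho_t$ is constant. The case you leave open, a simple eigenvalue decreasing to $0$, is a genuine gap, though. No choice of rates or Hamiltonian, and no appeal to Theorem~\ref{lifting}, can close it, because the obstruction applies to every lift. Let $L=-i[H,\cdot\,]+\sum_a L_a$ be any Lindbladian, let $\rho e_i=\lambda_ie_i$, and set $P_i=|e_i\rangle\langle e_i|$. Then $\langle e_i,[H,\rho]e_i\rangle=0$, $\langle e_i,a\rho a^\dagger e_i\rangle\ge\lambda_i|\langle e_i,ae_i\rangle|^2$ and $\tfrac12\langle e_i,\{a^\dagger a,\rho\}e_i\rangle=\lambda_i\norm{ae_i}^2$, so, with $\norm{L}$ the operator norm on trace class,
\[
\langle e_i,L(\rho)e_i\rangle\;\ge\;-\lambda_i\sum_a\norm{(1-P_i)ae_i}^2\;=\;\lambda_i\langle e_i,L(P_i)e_i\rangle\;\ge\;-\lambda_i\norm{L}.
\]
Since $\dot\lambda_i=\langle e_i,\dot\rho_t e_i\rangle=\langle e_i,L_t(\rho_t)e_i\rangle$, every lift satisfies $\norm{L_t}\ge-\dot\lambda_i/\lambda_i$. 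Integrating $\tfrac{d}{dt}\log\lambda_i\ge-\norm{L_t}$ shows more: if a positive eigenvalue reaches $0$ at an interior time, then $\int\norm{L_t}\,dt=\infty$, i.e.\ a Markovian evolution cannot lower the rank in finite time. Concretely, $\rho_t=\diag(1-t^2,t^2)$ on $[-\tfrac12,\tfrac12)$ is analytic with simple spectrum, yet every lift has $\norm{L_t}\ge2/|t|$ as $t\uparrow0$. Your estimate $\dot\lambda_i^2\le2\norm{\ddot\lambda_i}_\infty\lambda_i$ only gives $-\dot\lambda_i/\lambda_i\le(2\norm{\ddot\lambda_i}_\infty/\lambda_i)^{1/2}$, which still diverges (with equality in this example). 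Theorem~\ref{lifting} cannot rescue the argument, because this example also satisfies its hypotheses (1) and (2); note too that (2) is not even assumed in the corollary. So the statement needs an extra hypothesis. The paper's proof does not supply the missing step either. It passes to Theorem~\ref{cont_para} through a $C^k$ diagonalizing unitary and calls $a_1=\tfrac1{\sqrt2}x_{22}\rho_{11}^{-1/2}$ and $a_2$ of class $C^k$ ``since $\rho_{11}(t)$ is strictly positive'', which is exactly the constant-rank assumption. Its gluing via $E_{\rho_t}$ at rank shifts cannot remove a blow-up that the inequality above forces on every lift.

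Your regularity worry is also real and cannot be fixed. $H_t=i\dot U_tU_t^*$ and $\gamma_{ij}=c_t\lambda_i+\dot\lambda_i$ are only $C^{k-1}$, and that is sharp: a $C^k$ family $L_t$ would make $\dot\rho_t=L_t(\rho_t)$ of class $C^k$, i.e.\ $\rho\in C^{k+1}$. Any $\diag(\tfrac12+f,\tfrac12-f)$ with $0<f<\tfrac12$ and $f\in C^k\setminus C^{k+1}$ refutes the $C^k$ claim; the paper's replacer contains $\dot\rho_t$ and loses the same derivative. Rank increases are not automatically safe either, and your $P_0(t)=1-\sum_jP_j(t)$ is not even continuous across rank changes. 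Take $\rho_t=(1-\mu_t)|e_1\rangle\langle e_1|+\mu_t|v_t\rangle\langle v_t|$ with $\mu_t=e^{-1/t^2}$ and $v_t=\cos(1/t)e_2+\sin(1/t)e_3$. It is $C^\infty$ on $[0,\delta)$ with simple nonzero spectrum. For the kernel vector $w_t\perp e_1,v_t$, the condition $\langle w_t,L_t(\rho_t)w_t\rangle=\langle w_t,\dot\rho_tw_t\rangle=0$ forces $\langle w_t,av_t\rangle=0$ for every jump operator. Hence $\langle w_t,L_t(|v_t\rangle\langle v_t|)v_t\rangle=\mu_t^{-1}\langle w_t,\dot\rho_tv_t\rangle=-t^{-2}$ for every lift. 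What your argument does prove, and what the corollary should state, is a $C^{k-1}$ lift on intervals of constant rank. There it is a genuinely different and cleaner route than the paper's. The Hamiltonian $i\dot U_tU_t^*$ transports the whole eigenframe, including the support--kernel block that the paper tries to produce with $a_1,a_2$. Your rates coincide on $\rho_t$ with the replacer generator $c_t(\mc R_{\sigma_t}-\id)$, where $\sigma_t=\sum_i(\lambda_i+\dot\lambda_i/c_t)|e_i\rangle\langle e_i|$, and the condition $c_t\ge\max_i(-\dot\lambda_i/\lambda_i)$ is just $\sigma_t\ge0$.
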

The following corollary provides a constructive criterion.
\begin{corollary}
    Take a $C^k$ path of densities $\rho_t$. If for all points $s \in [t_0,t_1)$ where $\det \rho_s = 0 $ we have that  $\dot \rho_s = 0$ then $(\rho_t)_{t_0 \leq t < t_1}$ has a Lindbladian lift $L_t$ where $L_t$ is $C^k$.
\end{corollary}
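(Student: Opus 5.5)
The plan is to build the lift explicitly from \emph{replacement-type} generators, avoiding the spectral machinery behind Theorem~\ref{lifting} wherever possible. For a state $\sigma$ and a rate $c\ge 0$, the map $X\mapsto c\,(\sigma\tr X - X)$ is a Lindbladian of the form (\ref{lin}): take jump operators $\sqrt{\sigma}\,\ket{i}\bra{j}$, each with rate $c$, and $H=0$. Applied to $\rho_t$ it gives $c(\sigma-\rho_t)$. So (\ref{LL}) holds as soon as
\begin{equation*}
    c_t\,\sigma_t=\rho_t\,c_t+\dot\rho_t .
\end{equation*}
Since $\tr\dot\rho_t=0$, the right-hand side has trace $c_t$. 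Hence $\sigma_t=\rho_t+\dot\rho_t/c_t$ is a density matrix exactly when $c_t\rho_t+\dot\rho_t\ge 0$. I therefore set
\begin{equation*}
    L_t(X)=\tr(X)\,\lpara c_t\rho_t+\dot\rho_t\rpara - c_t X ,
\end{equation*}
which is $C^{k-1}$ in $t$ as soon as $c_t$ is. The smoothness bookkeeping ($C^k$ versus $C^{k-1}$) I would settle at the end by working with the $C^k$ data $\rho_t,\dot\rho_t$ exactly as the corollary intends. The whole problem then reduces to choosing a smooth scalar $c_t\ge 0$ with $c_t\rho_t+\dot\rho_t\ge 0$.

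\textbf{Step 1: the full-rank region.} On the open set $U=\{t:\det\rho_t>0\}$ we have $\lambda_{\min}(\rho_t)>0$. Any $c_t\ge \norm{\dot\rho_t}/\lambda_{\min}(\rho_t)$ works there. The right-hand side is continuous on $U$, so a partition of unity on $U$ yields a smooth majorant $c_t$.

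\textbf{Step 2: the singular set.} Let $Z=\{s:\det\rho_s=0\}$, which is closed. On the interior of $Z$ the hypothesis gives $\dot\rho\equiv 0$, so $\rho$ is constant there and we may take $c_t=0$, i.e.\ $L_t=0$. This is compatible with the formula above, since $\dot\rho_t=0$.

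\textbf{Step 3: gluing near $\partial Z$.} This is the main obstacle. Near a boundary point $s$ of $Z$ we need $c_t\ge\norm{\dot\rho_t}/\lambda_{\min}(\rho_t)$ to remain bounded as $t\to s$, so that $c$ extends continuously and $c_s\rho_s+\dot\rho_s=c_s\rho_s\ge0$ holds at $s$. A more refined requirement is only that $c_t\rho_t+\dot\rho_t\ge 0$ with $c$ bounded, which is a condition on $\dot\rho_t$ compressed to the nearly-null eigenspace of $\rho_t$.

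My first attempt would be a Taylor argument at $s$. Write $P$ for the projection onto $\ker\rho_s$. Since $\dot\rho_s=0$ and $\rho_t\ge 0$, the block $P\rho_tP$ is $\ge0$ and vanishes to second order. I would then compare $P\dot\rho_tP$ with $P\rho_tP$ using the elementary inequality for nonnegative $C^2$ functions, $|f'|^2\le 2\norm{f''}_\infty f$. This controls the off-null and null blocks of $\dot\rho_t$ in terms of $\sqrt{\rho_t}$, and a Schur-complement estimate should then turn that into positivity of $c\rho_t+\dot\rho_t$ for a suitable bounded $c$.

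If this estimate only gives a weaker bound on some block, the fallback is local. In a small neighbourhood of such an $s$, I would verify that vanishing of $\dot\rho_s$ makes the rank and eigenspaces of $\rho_t$ behave continuously there, so that properties (1) and (2) hold locally. I would then invoke Theorem~\ref{lifting} on that neighbourhood and glue it to the replacement-type lift from Step~1. Gluing is legitimate because the set of Lindbladian lifts of $\rho_t$ at a fixed time is convex: a convex combination $\phi L^{(1)}_t+(1-\phi)L^{(2)}_t$ with a smooth cutoff $\phi$ is again a Lindbladian satisfying (\ref{LL}).
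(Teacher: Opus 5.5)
Your ansatz is the paper's replacer Lindbladian with $c_t=1/\varepsilon_t$ (Corollary~\ref{rank_lift} and the formula given right after the statement), and Steps 1--2 are fine. The problem is Step 3. It is not a missing estimate but a genuine obstruction, and neither your Taylor route, nor the fallback, nor the paper's ``$\varepsilon_t$ can be chosen continuously'' can overcome it.

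Take $\rho_t=\diag(1-t^2,t^2)$ on $[-\tfrac12,\tfrac12)$. It is smooth, singular only at $t=0$, and $\dot\rho_0=0$. For $t<0$, positivity of $c_t\rho_t+\dot\rho_t=\diag\left(c_t(1-t^2)-2t,\ c_tt^2+2t\right)$ forces $c_t\ge 2/|t|$. The paper's own example similarly forces $c_t\ge(1+\sin t)/\cos t\to\infty$ as $t\uparrow\pi/2$. Your inequality $|f'|^2\le 2\norm{f''}_\infty f$ only gives $|\dot\mu|\le C\mu^{1/2}$ for the small eigenvalue $\mu$, whereas a bounded $c$ needs $\dot\mu\ge -c\,\mu$, and the square-root loss is sharp.

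Nor is this an artifact of replacers. If $\ket{1}$ is an eigenvector of $\rho_t$ with eigenvalue $\mu$, the Hamiltonian part drops out of $\bra{1}L_t(\rho_t)\ket{1}$. Hence $\bra{1}L_t(\rho_t)\ket{1}\ge-\mu\bra{1}\sum_i\gamma_i a_i^\dagger a_i\ket{1}\ge -C\norm{L_t}\mu$ (with traceless jump operators). So $\dot\mu=2t$, $\mu=t^2$ forces $\norm{L_t}\ge 2/(C|t|)$ for every lift. More generally, write $L_t=\Phi_t-K_t(\cdot)-(\cdot)K_t^\dagger$ with $\Phi_t$ completely positive, and let $\dot V_t=-K_tV_t$, $V_a=I$. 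Then $\rho_s\ge V_s\rho_aV_s^\dagger$, so the rank can never decrease along a path with a continuous lift.

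This example satisfies (1) and (2), so invoking Theorem~\ref{lifting} cannot help. Moreover, $\dot\rho=0$ on $Z$ does not give (2) anyway: $\diag(1-g_t,g_t)$ with $g\ge0$ smooth and vanishing exactly on a Cantor set satisfies the hypothesis. The paper's proof has the same hole: its $\varepsilon_t=1/c_t$ must tend to $0$ at such points, so $\varepsilon_t^{-1}(\mc R_{\rho_t+\varepsilon_t\dot\rho_t}-\id)$ blows up.

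So the corollary needs an extra hypothesis excluding any decrease of rank. Even then, your reduction to a scalar $c_t$ is too rigid. Via the Schur complement, $c_t\rho_t+\dot\rho_t\ge0$ also constrains the block of $\dot\rho_t$ coupling the nearly-null eigenvectors to the bulk, and a replacer can only produce that block dissipatively. Consider $\rho_t=\begin{pmatrix}1-t^4&t^2/2\\t^2/2&t^4\end{pmatrix}$ on $[0,\delta)$, where the rank only increases and $\dot\rho_0=0$. There $\det(c\rho_t+\dot\rho_t)\le\tfrac34c^2t^4+3ct^3-t^2$, which is negative whenever $c<0.3/t$. Yet $L_t=-ih_t[Y,\cdot\,]+\gamma_tL_{\ket{1}\bra{0}}$, with $Y$ the Pauli-$Y$ matrix, is a smooth lift with $L_0=0$. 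Here $h_t=t+O(t^5)$ and $\gamma_t=3t^3+O(t^7)\ge0$ solve the $(1,1)$ and $(1,2)$ entries of $L_t(\rho_t)=\dot\rho_t$. The Hamiltonian term is exactly what the replacer ansatz lacks.

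Finally, the $C^k$ bookkeeping cannot be settled in favour of $C^k$. If $L_t$ and $\rho_t$ are both $C^k$, then $\dot\rho_t=L_t(\rho_t)$ is $C^k$, i.e. $\rho_t\in C^{k+1}$. So $C^{k-1}$ is the right regularity for a merely $C^k$ path.
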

In the setting of Corollary 3, the generator $L_t$ can be given explicitly as
$$L_t=\frac{1}{\varepsilon_t}\lpara \mc R_{\rho_t+\varepsilon_t\dot\rho_t} -\id\rpara, \quad \varepsilon_t=e^{1-\frac{1}{(1-t)^2}}.$$ where $$\mc R_\sigma(\rho)=\tr(\rho)\sigma$$ is the replacer channel to the density $\sigma$. We call such $L_t$ \emph{replacer Lindbladians}.

\paragraph*{Single-path Indistinguishability.}
Consider the path of densities given as
$$\rho_t=\frac12\begin{pmatrix}
    1 & \sin(t)\\
    \sin(t) &1
\end{pmatrix}$$
This is canonically given by the pure dephasing evolution with time-local generator 
$$N_t(\rho)=\gamma(t)\lpara Z\rho Z-\rho\rpara, \quad \gamma(t)=\cot(t),$$
where $Z$ is the Pauli-$Z$ matrix, applied to the initial state 
$$\rho_0=\frac12\begin{pmatrix}
    1 & 1\\
    1 & 1
\end{pmatrix}.$$
The rate $\gamma(t)$ changes sign, and diverges at $t=\pi/2$, indicating Non-Markovian dynamics.
This is the propotype of many NM evolutions \cite{addis2014, bylicka2014, haikka2013}

We now construct a Markovian realization of the same trajectory. Owing to periodicity, it suffices to consider $t\in[0,2\pi]$. Note that $\det \rho_t =0$ only at the point $t=\pi/2$, where
$$\dot\rho_{\pi/2}=0.$$
This allows for a smooth Lindbladian lift of the form
$$L_t=\frac{1}{\varepsilon_t}\lpara \mc R_{\rho_t+\varepsilon_t\dot\rho_t}-\id\rpara, \quad \varepsilon_t=e^{1-\frac{1}{(1-t)^2}}.$$
where 
$$\dot\rho_t=\frac12 \begin{pmatrix}
0 & \cos(t)\\
\cos(t) & 0
\end{pmatrix}$$
Since $\rho_t$ is smooth, the
resulting family $\{L_t\}_t$ defines a smooth Markovian evolution realizing the trajectory.

We can see that for a single path of densities $\rho_t$ one cannot determine if it is a result of a Non-Markovian or Markovian evolution without additional information. In fact, by Theorem \ref{lifting}, almost all paths can be realized as a time-dependent Lindbladian. This indistinguishability result holds for collections of paths.

\paragraph*{Exponential Indistinguishability.} Given an evolution, how many paths do we have to check to see whether it is Non-Markovian or Markovian? The previous example shows that a single path is not enough. Here we construct a case where it is not enough to check $2^n$ paths where $n$ is the number of qubits.

We start with a path $$\rho_t = \frac14\begin{pmatrix}
    2 & 1+e^{-2t}\\
    1+e^{-2t} & 2
\end{pmatrix}.$$ This path can be realized as a trajectory of the eternally Non-Markovian qubit evolution introduced by Hall, Cresser, Li, and Andersson \cite{hall2014}, given by
\begin{align*}
    &T_t
\begin{pmatrix}
    \rho_{11} & \rho_{12}\\
    \rho_{12}^* & 1-\rho_{11}
\end{pmatrix}\\
&\qquad=
\begin{pmatrix}
    \tfrac12(1-e^{-2t})+\rho_{11}e^{-2t}
    &
    \tfrac12(1+e^{-2t})\rho_{12}
    \\
    \tfrac12(1+e^{-2t})\rho_{12}^*
    &
    \tfrac12(1+e^{-2t})-\rho_{11}e^{-2t}
\end{pmatrix}.
\end{align*}
This evolution is Non-Markovian for all $t>0$ due to persistent
negativity of the decay rates.

The state $\rho_t$ has full rank for all $t>0$ which yields a Markovian lift of the form $$L_t=\varepsilon_t^{-1}\lpara \mathcal R_{\rho_t+\varepsilon_t\dot\rho_t} -\id\rpara, \quad \dot\rho_t = \frac12 \begin{pmatrix}
0 & -e^{-2t}\\
-e^{-2t} & 0
\end{pmatrix}$$
where $\varepsilon_t$ is the same as before.

Note that $\rho_t$ has the asymptotic state
$$\rho_\infty= \frac14\begin{pmatrix}
2 & 1\\
1 & 2
\end{pmatrix}.$$ Now for any subset $S\subseteq\{1,\dots,n\}$, define the product paths
$$\rho^{(S)}_t=\bigotimes_{i=1}^n \rho_{i,t}^{(S)},\quad \rho_{i,t}^{(S)}= \begin{cases}
    \rho_t, & i\in S,\\
    \rho_\infty, & i\notin S.
\end{cases}$$
Then each of the $2^n$ paths $\rho^{(S)}_t$ is generated by the
same Lindbladian lift $\bigotimes_{i=1}^n L_t^i$, where $L_t^i$ is $L_t$ on the $i$-th register and identity everywhere else. Simultaneously, each of the $2^n$ paths is generated by the same Non-Markovian evolution $\bigotimes_{i=1}^n T_t^i$. Hence, an exponentially
large family of (supposedly Non-Markovian) trajectories can
be simultaneously realized by a single Markovian dynamics. Moreover, this evolution is of particular interest because it also cannot be determined to be Non-Markovian by typical distance measures \cite{hall2014}.

\paragraph*{Regional Indistinguishability for negative jump Lindbladian.}
Consider another dephasing channel
$$T_t(\rho)= \begin{pmatrix}
    \rho_{11} & \rho_{12} & \rho_{13}\sin(t)\\
    \rho_{21} & \rho_{22} & \rho_{23}\sin(t)\\
    \rho_{31}\sin(t) & \rho_{32}\sin(t) & \rho_{33}
\end{pmatrix},$$
which is Non-Markovian due to the oscillatory factor $\cot(t)$.
The time-local generator is given by 
$N_t=\cot(t) L_{\ket 3\bra 3}$
with jump operator $\ket3\bra3$.
Fix complex parameters $a,b$ with $|a|^2 + |b|^2 \leq \frac19$ and consider the trajectory
$$\rho_t= \begin{pmatrix}
    \frac13 & 0 & a\sin(t)\\
    0 & \frac13 & b\sin(t)\\
    a^*\sin(t) & b^*\sin(t) & \frac13
\end{pmatrix}.$$
Any time-local Lindbladian $L_t$ satisfying $L_t(\rho_t)=\dot\rho_t$
defines a lifting of this path.

Crucially, such a lifting can be extended by restriction to any trajectory
of the form
$$\tilde\rho_t= \begin{pmatrix}
    \frac13 & z & a\sin(t)\\
    z^* & \frac13 & b\sin(t)\\
    a^*\sin(t) & b^*\sin(t) & \frac13
\end{pmatrix},$$
with arbitrary $z$ close enough to $0$. Varying the parameter $z$ results in a connected region of paths. Consequently, a single Markovian generator $\tilde L_t$ simultaneously lifts an entire
connected region of trajectories arising from the same Non-Markovian evolution.  We refer to supplementary material \cite{supp_material} to see how the construction $L_t$ arises from the fixed parameters $a,b$. 

\paragraph*{Lifting Discrete Points.}
Since many characterizations of quantum dynamics are based on finite time samples \cite{tang2012, liu2011, cialdi2019, bernardes2016}, it is natural to ask whether such information suffices to detect Non-Markovianity.

Let $\sigma_1,\dots,\sigma_N \in \mc D(\mc H)$ be density operators specified at times
$0=t_1<\dots<t_N$.
Define a piecewise affine interpolation $\rho_t$ by
$$\rho_t = \frac{t_{k+1}-t}{t_{k+1}-t_k}\sigma_k + \frac{t-t_k}{t_{k+1}-t_k}\sigma_{k+1}, \quad t\in[t_k,t_{k+1}].$$
Then $\rho_{t_k}=\sigma_k$ for all $l$, and $\rho_t$ a piecewise affine path.

By Proposition 6 in the supplementary material \cite{supp_material}, any such piecewise affine path admits a realization by a time-local Lindbladian, yielding a Markovian evolution whose trajectory coincides with $\rho_t$.
Consequently, any finite set of discrete time samples arising from an arbitrary evolution can be reproduced by a Markovian dynamics, independently of whether the original evolution was Markovian or Non-Markovian.

Thus, discrete trajectory data do not suffice to detect Non-Markovianity, which remains a genuinely global property of the dynamical map.

\paragraph*{Discussion.}
On a single path, Non-Markovianity is not visible. While there are well-known obstructions for specially chosen pairs of trajectories, such as violations of the data processing inequality, information backflow, non-monotonic decay rates, or monotone channel functionals \cite{gour2021, saxena2020}, their detection requires \textit{a priori} knowledge of the evolution or optimization over initial conditions. There is no trajectory-level criterion that identifies the ``right'' pairs without additional global information.

Consequently, Non-Markovianity is a property of the global evolution rather than of individual paths. As shown in Example 3, even sampling many initial states together with their associated trajectories may fail to reveal Non-Markovian behavior. In such cases, full process tomography may be necessary. In fact, for any initial distribution of states, one can construct a family of Lindbladians, dependent on time and on the initial condition, that reproduces the observed trajectories of an arbitrary evolution. This construction follows naturally by considering the quantum–classical state
$\rho=\sum_j \mu_j\ket{j}\bra{j}\otimes\rho_j$
and applying the theorem on the enlarged space. This quantum–classical embedding is also natural in the context of noncommutative optimal transport. Moreover, replacing a single initial point by a small neighborhood around it and tracking the entire region along the trajectory still does not allow one to detect Markovianity.

The deeper origin of this indistinguishability lies in the geometry of the quantum state space (see supplementary material \cite{supp_material}). The state space does not admit a tangent vector space, but only a tangent cone. At boundary or corner points, certain Lindbladian directions, particularly those with strictly negative coefficients, are forbidden (see FIG \ref{figure}). 
\begin{figure}
    \centering
    \includegraphics[width=0.38\linewidth]{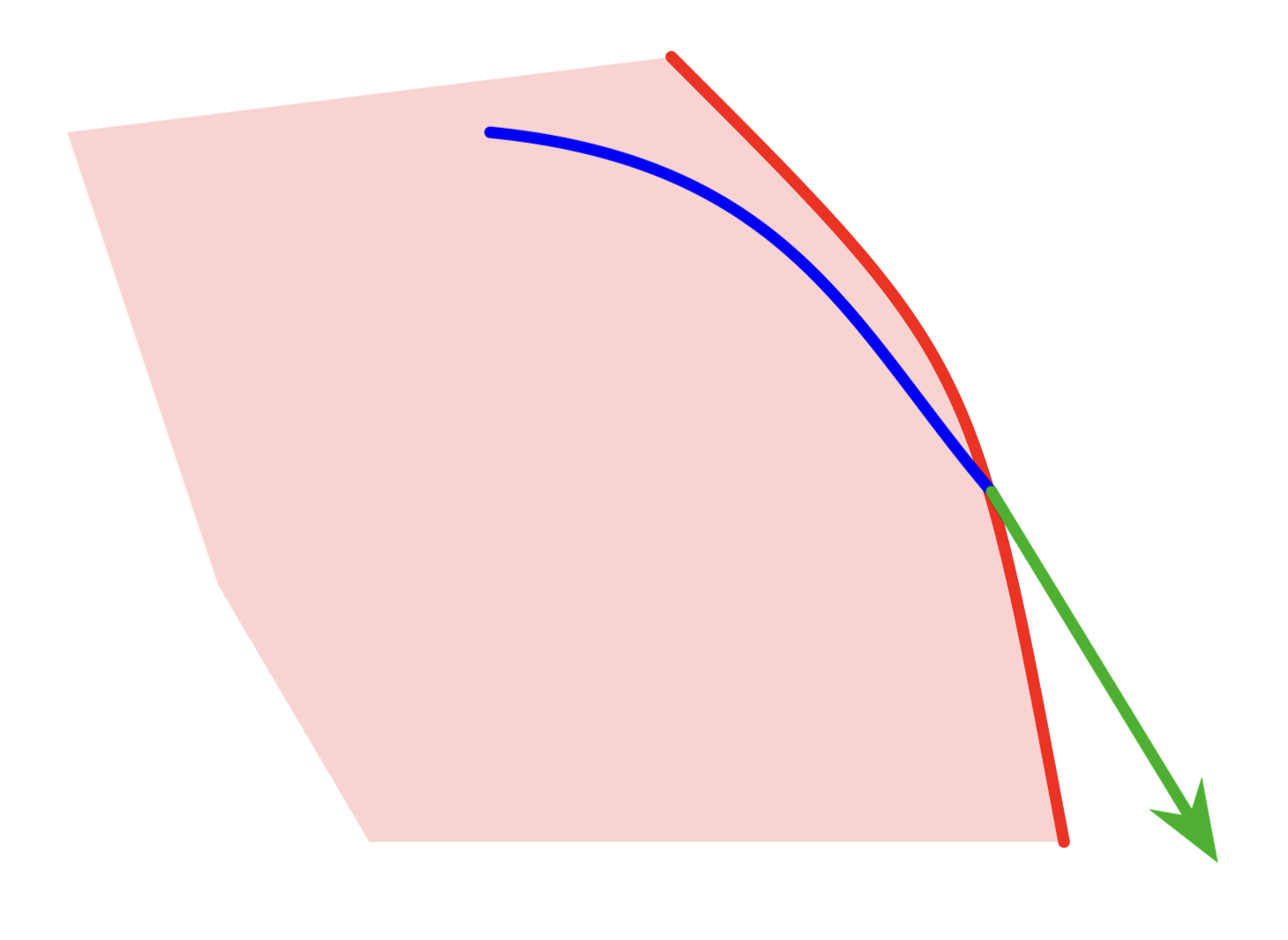}
    \includegraphics[width=0.35\linewidth]{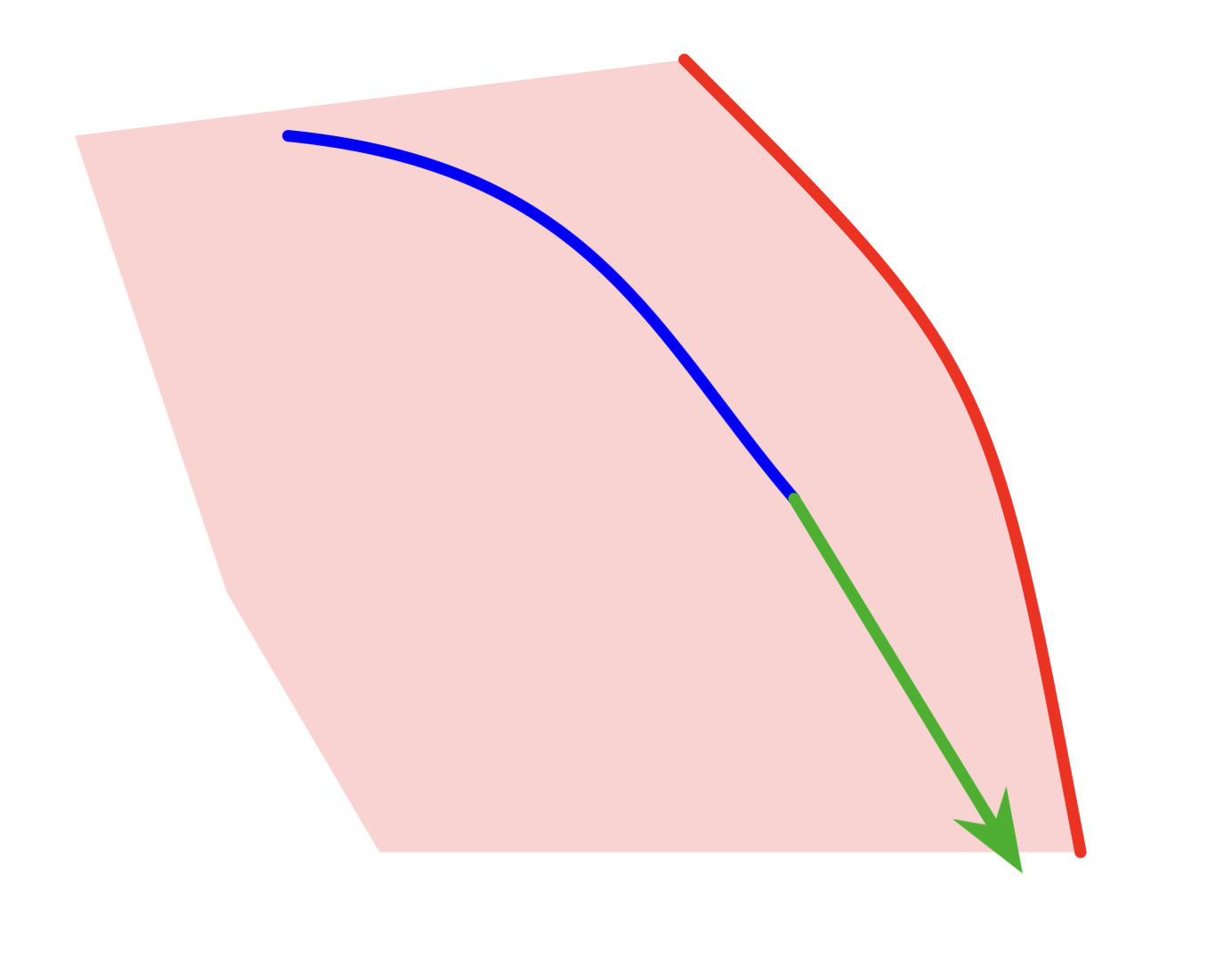}
    \caption{The blue line shows a path of densities with tangent given by the green arrow. On the left we see that a jump operator with negative sign can lead out of the space of densities if the path hits the boundary. On the right we see that the derivative stays inside the space of densities if path does not hit the boundary and can be lifted.}
    \label{figure}
    \vspace{-0.5cm}
\end{figure}

Markovianity may also fail at the endpoint of a closed time interval. Crucially, the construction of a time-local Lindbladian depends only on geometric properties of the trajectory, such as the stability of its eigenspaces. As a result, trajectories alone cannot distinguish between positive and completely positive evolutions, Markovian and Non-Markovian dynamics, or divisible and non-divisible processes.

Although Non-Markovianity does not generate additional trajectories, it remains a valuable resource for control tasks, including simulation, experimental design, and memory effects. Markovian evolutions that reproduce the same trajectories may require substantially larger resource sets, such as additional environment registers or more complex controls.

\paragraph*{Acknowledgments.}
    MJ is partially supported by NSF DMS 2247114.
    MJ also acknowledges the IPAM program on Non-commutative Optimal Transport for inspiration.

\vspace{-0.4cm}
\bibliography{nm_ref}

\clearpage
\onecolumngrid

\section*{Supplementary material for\\[0.5em]``The Hidden Nature of Non-Markovianity"}\label{supp}

\subsection{Tangent Cone and Lifting Problem}
We recall that the state space carries a natural stratified tangent geometry.
For $\rho\in\mc D(\mc H)$, we denote the tangent cone by
$$T_\rho^+\mc D(\mc H)=\lset \left.\dfrac{d}{dt}\right|_{t=0^+}\rho_t: \rho_0=\rho, \rho_t\in C^2([0,\infty), \mc D(\mc H))\rset.$$
The following characterization was obtained by the current authors \cite{control}:
\begin{equation}\label{tangent}
\begin{aligned}
    T_\rho^+\mc D(\mc{H}) = \lset L(\rho):L \text{ is Lindbladian}\rset.
\end{aligned}
\end{equation}
This is a characterization of the tangent cone as the set of kinematically admissible velocities at $\rho$, arising from all possible Lindblad generators evaluated at that point. It justifies why Lindbladian is the right notion to consider in this context. This is our motivation for \eqref{LL}.

\begin{question}[Lindbladian Lifting Problem] \label{question2}
    Given a path $\rho_t$ in $\mc D(\mc H)$, does there exist a family of Lindbladians $\{L_t\}_{t\ge0}$ such that $$\dot\rho_t = L_t(\rho_t)\, ?$$
    In other words, is the desired kinematic trajectory $\rho_t$ an integral curve of an admissible controlled Lindblad dynamics? A family of time-local Lindbladians satisfying this equation of motion is called a Lindbladian lift. 
\end{question}

\section{Existence of Reguarlized Lifts}
\subsection{Parametric and Geometric Lifts}
In classical geometric control and connection theory, horizontal lifting naturally separates into two notions: parametric lifting, which preserves the given time parametrization, and geometric lifting, which allows reparametrization. In other words, parametric liftings capture dynamical trajectories, whereas geometric liftings solely capture curves viewed as a set of points. This distinction arises because even very simple curves may fail to admit a lift in the stricter sense, while becoming liftable once their timing is adjusted.

Motivated by this classical picture, we begin with the more restrictive notion.
\begin{definition} We say that a path $\rho_t: I \to \mc D(H)$ has a \textbf{parametric lift} if there exists Lindbladians $L_t$ so that $\dot \rho_t = L_t(\rho_t)$ \end{definition}
\begin{example}
Take $\rho_t: [0,1) \to \mc D(H)$ with $\rho_t = \diag{(1-t,t)}$ with $\dot \rho_t = \diag{(-1,1)}$ Is is impossible to find a parametric lift for $\rho_t$. We write down an arbitrary Lindbladian that preserves the diagonal by $L_t = \sum \gamma_{ij}(t) L_{e_{ij}}$. We're going to assume that $\gamma_{ij}(t) \geq 0$, i.e. that the evolution is divisible. 
\begin{align*}
 \gamma_{00}(t) L_{e_{00}}\diag(1-t,t) &= \begin{pmatrix} 0 & 0 \\ 0 & 0\end{pmatrix} \\
\gamma_{01}(t)L_{e_{01}}\diag(1-t,t) &= \begin{pmatrix} \gamma_{01}(t)  t & 0 \\ 0 &  - \gamma_{01}(t) t \end{pmatrix} \\
\gamma_{10}(t)L_{e_{10}}\diag(1-t,t) &= \begin{pmatrix} - \gamma_{10}(t) (1-t)& 0 \\ 0 & \gamma_{10}(t)(1-t) \end{pmatrix} \\
\gamma_{11}(t)L_{e_{11}}\diag(1-t,t) &= \begin{pmatrix} 0 & 0 \\ 0 & 0\end{pmatrix}
\end{align*}

Hence, we obtain that 
$$L_t\diag(1-t,t) = \diag((\gamma_{01}(t) + \gamma_{10}(t))t -\gamma_{10}(t), (-\gamma_{01}(t)-\gamma_{10}(t))t + \gamma_{10}(t))$$

Write $\gamma_{01}(t) = \alpha$, $\gamma_{10}(t) = \beta$. So we have
\begin{equation*}
\begin{aligned}
    L_t\diag(1-t,t) &= \diag((\alpha + \beta)t -\beta, (-\alpha-\beta)t + \beta)\\
    &=  \diag(-1[(-\alpha-\beta)t + \beta], (-\alpha-\beta)t + \beta)\\
    &= \diag(-1,1)
\end{aligned}
\end{equation*}
To show this, we need that $-\alpha t - \beta t + \beta = 1$. First write $$\alpha t = - \beta t + \beta - 1$$ which gets us that $$\alpha = \frac{(1-t)\beta - 1}{t}.$$ From $\alpha > 0$ we have that $(1-t)\beta \geq 1$,  $$\beta \geq  \frac{1}{(1-t)}.$$ As $t \to 1$, $\beta$ must become unbounded and so there is no bounded parametric lift.
\end{example}

This computation shows that bounded parametric lifting fails solely because $\dot\rho_1$ lies outside $T_{\rho_1}^+ \mc D(\mc H)$. A single violation is enough to obstruct any bounded parametric lift. Still, the curve poses no intrinsic difficulty. Once we relax the parametrization, it admits a lift in the weaker geometric sense.

\begin{definition}
We say that a path $\rho_t: I \to \mc D(H)$ has a \textbf{geometric lift} if there exists a reparameterization $\rho_{f(t)}$ so that $\rho_{f(t)}$ has a parametric lift. 
\end{definition}
\begin{example}\label{moll}
$\rho_t = \diag(1-t,t)$ has a geometric lift. We take $f(t) = e^{1 - \frac{1}{(t-1)^2}}$ and write $\rho_{f(t)} = \diag(1-f(t),f(t))$ with $\dot \rho_t = f'(t) \diag(-1,1)$. We let $\mc R_{\diag(0,1)}$ be the replacer channel to the state $\diag(0,1)$ and let $$L_t = \frac{f'(t)}{f(t)-1} (\id - \mc R_{\diag(0,1)}).$$

We see that $$L_t(\rho_{f(t)}) = \frac{f'(t)}{f(t)-1}  [\diag(1-f(t),f(t)) - \diag(0,1)] = f'(t) \diag(-1,1)$$ as desired. Note that $\lim_{t \to 1} \frac{f'(t)}{f(t)-1} = 0$ so $L_t$ is bounded. Note that there are many different choices of $f(t)$ that can work here. 

\end{example}

It is clear that any path admitting a parametric lift also admits a geometric lift, simply by taking the identity reparametrization.

In what follows we consider both notions: we seek parametric lifts whenever they exist, and turn to geometric lifts when a parametric realization is obstructed.

\subsection{Measurable and Integrable Lifts}
 Integrability (and measurability) of the family $L_t$ are relevant for existence and uniqueness of the channels give by the equation of motion, $\dot T_t = L_t T_t$. 
\begin{proposition}\label{affine}
There exists a geometric lift for any piecewise affine path $\rho_t : I \to \mc D(H)$. 
\end{proposition}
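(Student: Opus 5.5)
On each affine piece $[t_k,t_{k+1}]$ write $\rho_t=\sigma_k+(t-t_k)\Delta_k$, where $\Delta_k=(\sigma_{k+1}-\sigma_k)/(t_{k+1}-t_k)$. I will reparametrize each segment separately by a smooth bijection $f_k:[t_k,t_{k+1}]\to[t_k,t_{k+1}]$ whose derivatives of all orders vanish at both endpoints. A concrete choice is built from the mollifier $e^{1-1/(\cdot)^2}$ of Example \ref{moll}, or from the standard smooth step $\int e^{-1/(s(1-s))}ds$, normalized. Gluing the $f_k$ gives a global reparametrization $f$. The curve $\tilde\rho_t=\rho_{f(t)}$ is then $C^\infty$, since every derivative vanishes at the breakpoints. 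This kills the kinks, so the tangent vectors at the nodes are $0$ and lie trivially in the tangent cone.

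On the interior of a segment, the plan is to imitate Example \ref{moll} with replacer Lindbladians. Since $\tilde\rho_t=\sigma_k+(f(t)-t_k)\Delta_k$ lies on the segment between $\sigma_k$ and $\sigma_{k+1}$, it can be written as a convex combination of these two endpoints. Set
$$L_t=\frac{f'(t)}{t_{k+1}-f(t)}\left(\mc R_{\sigma_{k+1}}-\id\right).$$
Then
$$L_t(\tilde\rho_t)=\frac{f'(t)}{t_{k+1}-f(t)}\,\bigl(\sigma_{k+1}-\tilde\rho_t\bigr)=\frac{f'(t)}{t_{k+1}-f(t)}\,(t_{k+1}-f(t))\Delta_k=f'(t)\Delta_k=\dot{\tilde\rho}_t.$$
The map $\mc R_{\sigma}-\id$ is a Lindbladian, because $\mc R_\sigma$ is CPTP. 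The prefactor is nonnegative whenever $f'\ge 0$, so $L_t$ is a legitimate Lindbladian for $t<t_{k+1}$. The key point is that we move toward the endpoint $\sigma_{k+1}$, which is always a state, so no positivity issue arises inside the segment, even though $\sigma_k$ might be rank-deficient.

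The main obstacle is boundedness and regularity of the rate $f'(t)/(t_{k+1}-f(t))$ near $t_{k+1}$, where the denominator vanishes. Write $g=t_{k+1}-f$, so the rate is $-g'/g=-(\log g)'$. It stays bounded exactly when $g$ does not decay faster than exponentially, and a flat function with all derivatives zero, such as $e^{-1/s^2}$, decays too fast. Example \ref{moll} claims $\lim f'/(f-1)=0$, but for a flat $f$ this quotient actually blows up. I therefore expect two options.

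\emph{Option 1.} Accept a measurable, locally integrable lift. Its integral diverges only logarithmically at the node, and it still generates the channels as a limit. Proposition \ref{affine} sits under the heading ``Measurable and Integrable Lifts'', so this seems to be the intended conclusion.

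\emph{Option 2.} Get a bounded lift by making $g$ decay exponentially. Choose $f$ with $g(t)=(t_{k+1}-t_k)e^{-\lambda(t-t_k)/(t_{k+1}-t)}$, so that $f$ reaches $t_{k+1}$ only in the limit. Then the rate is $\lambda$ times a bounded factor. The price is that each segment takes infinite reparametrized time, so instead I would let the rate grow in a controlled integrable way, with $\int L$ finite.

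Finally, I will check that $\dot{\tilde\rho}_t=0$ at the nodes, so choosing $L_t=0$ there is consistent, and that the pieces concatenate into one lift on $I$.
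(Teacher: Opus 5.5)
Your construction is the same as the paper's. Its proof also uses the replacer toward the next node, $L^i_t=\frac{f'(t)}{f(t)-1}(\id-\mc R_{\sigma_{i+1}})$, and glues the pieces with indicator functions. This does give a geometric lift in the pointwise sense of the paper's definition: the rate is nonnegative on each open segment, and $L_t=0$ at the nodes, where $\dot{\tilde\rho}_t=0$.

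You are also right that the rate is unbounded, so the boundedness claimed in Example~\ref{moll} fails. With the paper's literal $f(t)=e^{1-1/(t-1)^2}$, which decreases from $1$ to $0$, the quotient does tend to $0$ as $t\to1$. But it behaves like $1/t$ near $t=0$, and the coefficient of $\mc R_{\diag(0,1)}-\id$ is negative. So your version, with increasing $f$ and the replacer toward the endpoint being approached, is the correct form of the argument.

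What fails is your treatment of the obstacle.
\begin{itemize}
\item Option 2 does not give a bounded rate. With $u=(t-t_k)/(t_{k+1}-t)$, the rate is $-g'/g=\lambda u'=\lambda(t_{k+1}-t_k)/(t_{k+1}-t)^2$. This is unbounded with divergent integral, not $\lambda$ times a bounded factor.
\item The fallback of a rate with finite integral is impossible. For $L_t=\gamma(t)(\mc R_{\sigma_{k+1}}-\id)$ one gets $t_{k+1}-f(t)=(t_{k+1}-t_k)\exp(-\int_{t_k}^t\gamma\,ds)$, so the node is reached in finite time only if $\int\gamma=\infty$.
\item This is not an artifact of the ansatz. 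Write $L_t=\Phi_t-G_t(\cdot)-(\cdot)G_t^\dagger$ with $\Phi_t$ completely positive, choosing $G_t$ via the traceless normal form so that $\norm{G_t}\le C\norm{L_t}$. Solve $\dot W_t=-G_tW_t$ with $W_s=I$. Then $\frac{d}{dt}(W_t^{-1}\rho_tW_t^{-\dagger})=W_t^{-1}\Phi_t(\rho_t)W_t^{-\dagger}\ge0$. Hence $\rho_t\ge W_t\rho_sW_t^\dagger$ with $W_t$ invertible whenever $\int\norm{L_t}\,dt<\infty$, so a Markovian evolution with integrable generator never lowers the rank.
\end{itemize}

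Consequently, consider any node with $\mathrm{supp}\,\sigma_k\not\subseteq\mathrm{supp}\,\sigma_{k+1}$, e.g.\ $\diag(1-t,t)$ at $t=1$. No reparametrization that passes through such a node in finite time admits a lift integrable near it. At such nodes your Option 1 is the best one can do. It should be stated as integrability on compact subsets of the open segments only: the integral of the rate diverges at each node, and for flat $f$ it does so faster than $\log\frac{1}{t_{k+1}-t}$. The channels still converge to $\mc R_{\sigma_{k+1}}$ at the node, because the generators on one segment commute.

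Conversely, suppose $\mathrm{supp}\,\sigma_k\subseteq\mathrm{supp}\,\sigma_{k+1}$ for every $k$. Then $\sigma_{k+1}+\epsilon_k\Delta_k\ge0$ for small $\epsilon_k>0$. The overshooting replacer $L_t=\frac{f'(t)}{\epsilon_k}(\mc R_{\tilde\rho_t+\epsilon_k\Delta_k}-\id)$ then gives a bounded, even $C^\infty$, geometric lift. So the support condition is exactly what separates paths with bounded lifts from those without.
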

\begin{proof}
Indeed, take $f(t)$ as in example \ref{moll}. Then for any affine path $\sigma_i \mapsto \sigma_{i+1}$ we can find an $L^i_t = \frac{f'(t)}{f(t)-1}(\id - \mc R_{\sigma_{i+1}}): [0,1) \to \mc D(H)$. If the path only has $n$ many pieces we write $L^k_t = 0$ for $k \geq n$. Let $\chi$ be the indicator function on $[0,1)$. We can then write $L_t = \sum_{i=0}^{\infty} \chi(t - i) L^i_t$. This achieves $f'(t) \dot \rho_{f(t)} = L_t(\rho_{f(t)})$. 
\end{proof}

\begin{proposition}
Let $\rho_t: I \to \mc D(H)$ be an arbitrary measurable path. For all $\varepsilon$, there exists $\eta_t: I \to \mc D(H)$ such that $\int_I \norm{\rho_t - \eta_t}_1 dt < \varepsilon$ where $\eta_t$ has a geometric lift.
\end{proposition}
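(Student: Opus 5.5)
The plan is to approximate $\rho_t$ in $L^1(I;\mc D(\mc H))$ by a continuous path, then by a piecewise affine path, and finally to invoke Proposition \ref{affine}. Throughout we assume $I$ is bounded; otherwise the integral need not be finite. In the unbounded case we would either use a weighted integral or first truncate to a bounded interval $[t_0,T]$ whose tail contribution is below $\varepsilon/3$. Since $\mc D(\mc H)$ is compact, $\rho_t$ is bounded, with $\norm{\rho_t}_1=1$, so $\rho\in L^1(I;\mb B(\mc H))$.

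\textbf{Step 1 (continuous approximation).} Continuous functions are dense in $L^1(I;\mb B(\mc H))$, so we pick a continuous $g$ with $\int_I\norm{\rho_t-g_t}_1dt<\varepsilon/3$. The function $g$ need not take values in $\mc D(\mc H)$. To fix this, we would mollify $\rho$ directly: with a nonnegative mollifier $\phi_\delta$ of unit mass, set $g_t=\int\phi_\delta(t-s)\rho_s\,ds$, extending $\rho$ by a constant state outside $I$. This is an average of density matrices, so it is again a density matrix, because $\mc D(\mc H)$ is convex and closed. The map $g$ is continuous, and $g\to\rho$ in $L^1$ as $\delta\to0$ by the standard approximate-identity argument applied entrywise. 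So $g$ is a continuous path in $\mc D(\mc H)$.

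\textbf{Step 2 (piecewise affine approximation).} On the closure of the bounded interval, $g$ is uniformly continuous. Choose a partition $t_0=s_0<s_1<\dots<s_N=t_1$ fine enough that $\norm{g_t-g_{s}}_1<\varepsilon/(3|I|)$ whenever $|t-s|\le\max_k(s_{k+1}-s_k)$. Define $\eta_t$ as the affine interpolation of the points $g_{s_k}$, exactly as in the section \emph{Lifting Discrete Points}. By convexity, $\eta_t\in\mc D(\mc H)$. On each piece, $\eta_t$ is a convex combination of $g_{s_k}$ and $g_{s_{k+1}}$, so $\norm{\eta_t-g_t}_1<\varepsilon/(3|I|)$, and integrating gives an error below $\varepsilon/3$. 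The triangle inequality then yields $\int_I\norm{\rho_t-\eta_t}_1dt<\varepsilon$.

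\textbf{Step 3.} The path $\eta_t$ is piecewise affine, so Proposition \ref{affine} gives it a geometric lift.

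I expect the main obstacle to be keeping the approximants inside $\mc D(\mc H)$. A generic $L^1$ density argument would produce non-states, and the mollification/convexity trick in Step 1 is the point that needs care, including the boundary of $I$. A secondary subtlety is the unbounded-$I$ case. There a countable partition works with the infinite-piece version of Proposition \ref{affine}, which already allows $L^k_t$ for all $k$. The errors are distributed as $\varepsilon 2^{-k}$ over unit subintervals.
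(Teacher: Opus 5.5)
Your proposal is correct and takes the same route as the paper: approximate $\rho$ in $L^1$ by a piecewise affine path, then invoke Proposition \ref{affine}. The paper only asserts that piecewise affine functions are dense in $L^1$; your mollify-then-interpolate step uses convexity and closedness of $\mc D(\mc H)$ to keep the approximant density-matrix valued, which fills in a detail the paper leaves implicit.
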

\begin{proof}
Piecewise affine functions are dense in $L^1$. From example \ref{moll} we have that $\eta_t$ has a geometric lift.
\end{proof}

We have the following result from Cai, Govindarajan and Junge \cite{control} which allows for parametric lifts.
\begin{proposition}[spectral lifting] Let $\lambda(t)$ be the spectral gap of $\rho_t$. Assume $\rho_t, \dot \rho_t, {\dot \rho_t}^{-1} \in L^{\infty}(I,\mc D(\mc H))$ and $\dot \rho_t \in T^+_{\rho_t}(\mc D(\mc H))$ almost everywhere. If  $\lambda_{min}(t)^{-1}, \sqrt{\lambda_{min}(t)^{-1}} \in L^1(I, \mc D(\mc H))$ then $\rho_t$ has a measurable parametric lift.
\end{proposition}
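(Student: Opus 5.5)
The plan is to build $L_t$ pointwise in $t$ by an explicit formula in the spectral data of $\rho_t$. I will then check two things: that the formula depends measurably on $t$, and that its norm is dominated by a combination of $\lambda_{min}(t)^{-1}$, $\sqrt{\lambda_{min}(t)^{-1}}$ and $\norm{\dot\rho_t}_\infty$. These are exactly the quantities assumed integrable, which is why both $\lambda_{min}^{-1}$ and its square root appear in the hypothesis.

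\textbf{Block decomposition.} Fix $t$ and let $P_t$ be the support projection of $\rho_t$, $Q_t=1-P_t$ its kernel projection, and $\lambda_{min}(t)$ the smallest nonzero eigenvalue. Write $D=\dot\rho_t$ in $2\times 2$ block form with respect to $P_t\oplus Q_t$. The hypothesis $D\in T^+_{\rho_t}\mc D(\mc H)$, together with the characterization \eqref{tangent} and the general form \eqref{lin}, yields the admissibility conditions I will use:
\begin{itemize}
\item $Q_tDQ_t\ge 0$;
\item the off-diagonal block $Q_tDP_t$ is controlled by the $P$-block, in the sense that it can be produced from $\rho_t$ by a jump operator mapping the support into the kernel.
\end{itemize}
The construction then splits into three pieces. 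First, on the support block, set $K=\tfrac12 P_tDP_t\,\rho_t^{-1}$, where the inverse is taken on the support. Then $K\rho_t+\rho_tK^\dagger$ reproduces $P_tDP_t$ with $\norm{K}\le \norm{D}/(2\lambda_{min})$. I split $K$ into an anti-Hermitian part, which becomes $-iH_t$, and a Hermitian part. The Hermitian part must be rewritten as $-\tfrac12 a^\dagger a$ plus a compensating completely positive term. For this I shift by a multiple $c\,\mathbb 1$ and use jumps $a_i=\sqrt{c}\,u_i$, choosing the constant so that trace preservation holds; this piece costs $O(\lambda_{min}^{-1})$. Second, the $Q_tDQ_t\ge 0$ block is produced by jump operators of the form $a=(Q_tDQ_t)^{1/2}V\rho_t^{-1/2}P_t$ for a suitable partial isometry $V$. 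These are exactly the operators whose size is $\sqrt{\lambda_{min}^{-1}}$, and they account for the square-root term. Third, the off-diagonal blocks are generated by cross terms, again of order $\norm{D}\lambda_{min}^{-1/2}$.

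\textbf{Measurability and integrability.} I take $\rho_t$ measurable. Its eigenvalues, its spectral projections (as Borel functions of $\rho_t$ off a null set), and $\rho_t^{-1/2}$ on the support are then measurable, so the jump operators and Hamiltonian are measurable in $t$. The norm estimate above, combined with $\rho,\dot\rho\in L^\infty$ and the assumption that $\lambda_{min}^{-1}$ and $\sqrt{\lambda_{min}^{-1}}$ lie in $L^1$, gives $\norm{L_t}\in L^1(I)$. The identity $\dot\rho_t=L_t(\rho_t)$ holds a.e. by construction. On the remaining null set I set $L_t=0$.

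\textbf{Main obstacle.} The hardest step is the Hermitian part of the support block. I need to rewrite it in genuine GKSL form with nonnegative rates while keeping the bound linear in $\lambda_{min}^{-1}$, and simultaneously make the kernel-feeding jumps from the second piece consistent with trace preservation. My first attempt will absorb every trace defect into a single replacer-type term $\gamma_t(\mc R_{\sigma_t}-\id)$ with $\sigma_t$ built from $\rho_t+\varepsilon\dot\rho_t$, as in the replacer Lindbladians earlier. I will verify positivity using the tangent-cone condition, and I will use the hypothesis on $\dot\rho_t^{-1}$ if needed to bound the required $\varepsilon$ below.
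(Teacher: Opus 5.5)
The paper gives no proof of this proposition; it is quoted from \cite{control}. Your sketch therefore has to stand on its own, and it has a genuine gap in how the off-diagonal blocks are produced. Your second ``admissibility condition'' does not follow from $\dot\rho_t\in T^+_{\rho_t}\mc D(\mc H)$, and it is false in general. For any Lindbladian, $QL(\rho)Q=\sum_i (Qa_iP)\rho(Qa_iP)^\dagger$, because $G\rho+\rho G^\dagger$ vanishes on the kernel block. So $QDQ\ge 0$ is the \emph{only} constraint, and $QDP$ is unrestricted. The kernel block of $L(\rho)$ must also equal $QDQ$ exactly. Hence the cross terms $\sum_i (Qa_iP)\rho(Pa_iP)^\dagger$, coming from jumps that map the support into the kernel, have range inside $\operatorname{range}(QDQ)$. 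Take $\rho=\ket{0}\bra{0}$ and $D=\sigma_x$, which lies in the tangent cone. There no such jump is allowed at all, so your third step cannot produce $QDP$. The missing ingredient is a Hamiltonian. Let $X=QDP$ and let $\rho^{+}$ be the inverse of $\rho$ on its support. Then $H=i(X\rho^{+}-\rho^{+}X^\dagger)$ is Hermitian, satisfies $-i[H,\rho]=X+X^\dagger$, and has $\norm{H}\le 2\norm{D}/\lambda_{min}$.

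Your support-block step is also not closed. With unitary $u_i$, jumps $\sqrt c\,u_i$ make $\sum_i a_i^\dagger a_i$ a scalar, so they cannot absorb a non-scalar $K_h$. Moreover $\tr(\rho K_h)=\tfrac12\tr(PDP)=-\tfrac12\tr(QDQ)$, so this piece cannot be trace-consistent on its own. You flag this as the main obstacle and defer it to a replacer term, but that deferred step is the real content of the proof. Once $H$ handles the off-diagonal part, the replacer does all the remaining work. Put $D'=PDP+QDQ$, $\varepsilon=\lambda_{min}/(1+\norm{D})$ and $\sigma=\rho+\varepsilon D'$. Then $\sigma$ is block diagonal with support block $\ge(\lambda_{min}-\varepsilon\norm{D})P>0$, kernel block $\varepsilon QDQ\ge 0$, and trace $1$. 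The generator
\[
L=-i[H,\,\cdot\,]+\varepsilon^{-1}(\mc R_\sigma-\id)
\]
satisfies $L(\rho)=D$ and $\norm{L}\le C(1+\norm{D})\lambda_{min}^{-1}$. No $K$ and no kernel-feeding jumps are needed. Your Borel-functional-calculus argument for measurability then applies verbatim to $P_t$, $\rho_t^{+}$ and $\lambda_{min}(t)$. Finally, your explanation of the $\sqrt{\lambda_{min}^{-1}}$ hypothesis is off. A jump of norm $\lambda_{min}^{-1/2}$ yields a generator of norm of order $\lambda_{min}^{-1}$, since $L_a$ is quadratic in $a$. Neither that hypothesis nor the one on $\dot\rho_t^{-1}$ is used.
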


\subsection{Continuous and Differentiable Lifts}
We can strengthen the spectral lifting result to allow for continuous lifts by smoothing over the points where the spectral gap goes to $0$. However, this smoothing operation will require the set of rank shifts of $\rho_t$ to be sparse. Let $\rankshift(\rho_t)$ be the set of discontinuities of $\rank(\rho_t)$. We have the following result:

\begin{theorem}\label{cont_para}
    Take $\rho_t \in C^2(I, \mc D(H))$ with the property that $\dot \rho_t \in T^+_{\rho_t}(\mc D(H))$. If $\rho_t$ has a continuous unitary diagonalization and $\rankshift(\rho_t)$ and has no accumulation points, then $\rho_t$ has a continuous parametric lift. 
\end{theorem}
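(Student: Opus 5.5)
The plan is to build lifts locally and then glue them. The gluing rests on one structural fact. For fixed $t$, the set $\mathcal L_t=\lset L \text{ Lindbladian}: L(\rho_t)=\dot\rho_t\rset$ is convex, because Lindbladians form a convex cone and the constraint is affine in $L$. Hence if $\{\phi_k\}$ is a continuous partition of unity subordinate to an open cover $\{I_k\}$ of $I$, and each $L^{(k)}_t$ is a continuous lift on $I_k$, then $L_t=\sum_k\phi_k(t)L^{(k)}_t$ is again a continuous lift. Because $\rankshift(\rho_t)$ has no accumulation points, $I$ is covered by two kinds of sets. The first kind are open intervals on which $\rank\rho_t$ is constant. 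The second kind are small neighbourhoods $J_s$ of the isolated rank-shift points $s$. It therefore suffices to build continuous lifts on each type of set.

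\emph{Constant-rank intervals.} Write $\rho_t=U_tD_tU_t^*$ using the continuous unitary diagonalization. Let $P_t$ be the support projection and $Q_t=1-P_t$; both are continuous on the interval. Since $Q_t\rho_tQ_t=0$ identically and the rank is constant, differentiating gives $Q_t\dot\rho_tQ_t=0$. Thus $\dot\rho_t$ splits into a support block $P_t\dot\rho_tP_t$ and an off-diagonal part $X_t+X_t^*$ with $X_t=P_t\dot\rho_tQ_t$. The off-diagonal part is generated by a Hamiltonian term: with $H_t=i(\rho_t^{+}X_t-X_t^*\rho_t^{+})$, where $\rho_t^+$ is the pseudo-inverse, one checks (adjusting signs as needed) that $-i[H_t,\rho_t]=X_t+X_t^*$. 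Here $\rho_t^+$ is continuous because the nonzero eigenvalues stay bounded away from $0$ locally. The remaining block $\sigma_t:=P_t(\rho_t-i[H_t,\rho_t]\cdot 0)P_t$ is full rank on $P_t\mc H$. On it I would use a replacer Lindbladian $\varepsilon^{-1}(\mc R_{\rho_t+\varepsilon\dot\rho^{P}_t}-\id)$, compressed to the support, where $\dot\rho^P_t=P_t\dot\rho_tP_t$ and $\varepsilon>0$ is small enough that $\rho_t+\varepsilon\dot\rho^P_t\ge 0$. This is possible locally uniformly, since $\lambda_{\min}$ of the block is bounded below on compact subintervals. Continuity follows from continuity of $\rho_t,\dot\rho_t,P_t$.

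\emph{Near a rank shift $s$.} This is the step I expect to be the main obstacle. By the tangent cone characterization \eqref{tangent}, the hypothesis $\dot\rho_s\in T^+_{\rho_s}\mc D(\mc H)$ gives a Lindbladian $L_s$ with $L_s(\rho_s)=\dot\rho_s$. I would look for a lift of the form $L_t=L_s+K_t$ on $J_s$ with $K_t\to 0$ as $t\to s$. The required correction is $K_t(\rho_t)=\dot\rho_t-L_s(\rho_t)$. Using $C^2$, Taylor expansion gives $\dot\rho_t-L_s(\rho_t)=O(|t-s|)$. The difficulty is that $K_t$ must be a genuine Lindbladian while eigenvalues of $\rho_t$ tend to $0$, so the pseudo-inverse construction above blows up like $\lambda_{\min}(t)^{-1}$. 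I would first handle separately the eigenvalues that vanish at $s$. Using the continuous diagonalization and $C^2$, such an eigenvalue $\lambda(t)$ with $\lambda(s)=0$ satisfies $\lambda(t)\ge0$ and hence $\dot\lambda(s)=0$, so $\lambda(t)=O(|t-s|^2)$. The derivative components in those directions are then $O(|t-s|)$, which should keep ratios like $\dot\lambda/\lambda$ and the off-diagonal corrections bounded, or at least of vanishing size relative to the corresponding jump rates. If the direct estimate fails, the fallback is to use Lindbladians of the form $a\rho a^\dagger-\frac12\{a^\dagger a,\rho\}$ with jump operators $a$ mapping the support into the vanishing eigendirections; their rates are controlled by $\lambda_{\min}$ on the nonvanishing part, not the vanishing one.

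With both local constructions in place, the partition-of-unity argument from the first paragraph assembles them into a continuous parametric lift on all of $I$.
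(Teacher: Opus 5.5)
\textbf{The gap is the rank-shift step, and it cannot be closed under the stated hypotheses.} Your reduction is sound. The set of lifts at time $t$ is convex, so a partition of unity glues local continuous lifts. Your constant-rank construction (a Hamiltonian for $P_t\dot\rho_tQ_t$, a compressed replacer on the support, using $Q_t\dot\rho_tQ_t=0$) works on compact subintervals and does not even need the continuous diagonalization. The trouble is in the neighbourhoods $J_s$, where your estimate is off. From $\lambda(t)=O((t-s)^2)$ and $\dot\lambda(t)=O(|t-s|)$, the ratio $\dot\lambda/\lambda$ is of order $1/|t-s|$, which is unbounded.

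This blow-up is forced on every lift; it is not an artifact of the pseudo-inverse. Let $v$ be a unit eigenvector of $\rho$ with eigenvalue $\lambda$, and write $\rho=\lambda\ket{v}\bra{v}+\rho'$ with $\rho'v=0$. In $\langle v,L(\rho')v\rangle$ the Hamiltonian and anticommutator contributions vanish and the jump contributions are nonnegative. Hence for every Lindbladian $L$, with $\norm{L}$ its norm on trace class,
\[
\langle v,L(\rho)v\rangle\ \ge\ -\lambda\,\norm{L}.
\]
Along any lift, an eigenvalue decreasing to $0$ as $t\uparrow s$ therefore forces $\norm{L_t}\ge|\dot\lambda(t)|/\lambda(t)$, and by Gronwall $\int^s\norm{L_t}\,dt=\infty$. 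Your fallback does not help. Jumps from the support into the vanishing directions only \emph{add} weight there, whereas on this side weight must be removed, and that can only happen at a rate proportional to $\lambda$.

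Concretely, take $\rho_t=\diag(1-t^2,t^2)$ on $I=[-\tfrac12,\tfrac12)$. It meets every hypothesis: it is $C^\infty$ and already diagonal, $\rankshift(\rho_t)=\{0\}$, it has full rank for $t\neq0$, and $\dot\rho_0=0$. Yet with $v=e_2$ the inequality gives $2t\ge -t^2\norm{L_t}$, i.e. $\norm{L_t}\ge 2/|t|$ for $t<0$. So no continuous, and indeed no locally integrable, lift exists on $I$. The paper's sine example $\frac12\begin{pmatrix}1&\sin t\\ \sin t&1\end{pmatrix}$ behaves the same way as $t\uparrow\pi/2$ along the fixed eigenvector $(1,-1)/\sqrt2$.

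So the statement as formulated needs an additional hypothesis. The hypothesis $\dot\rho_t\in T^+_{\rho_t}\mc D(\mc H)$ is pointwise and cannot detect the rate $\dot\lambda/\lambda$. The inequality above shows that a bound $\dot\lambda\ge -C\lambda$ on small eigenvalues must hold locally; in particular this rules out rank drops in forward time. The paper's own rank-shift step does not get around this either. Its $L_t\circ E_{\rho_t}+L_{t_0}\circ(\id-E_{\rho_t})$ still takes the value $\dot\rho_t$ at $\rho_t$, and it is not of Lindblad form in general.
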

\begin{proof}
We know that $\rho_t$ can be diagonalized by $U_t \rho_t U^*_t$ where $U_t$ is continuous. We will then assume, without loss of generality that $\rho_t$ is diagonal since we can always modify $\tilde L_t =  \frac{d}{dt}(Ad_{U_t} T_t)$ where $\dot T_t = L_t T_t$ is the semigroup that lifts $\rho_t$. For now we assume $\rho_t$ is of constant rank, we will weaken this assumption later. \\
We write $$\rho_t = \begin{pmatrix} \rho_{11}(t) & 0 & 0 \\ 0 & 0 & 0 \\ 0& 0 &0  \end{pmatrix} \qquad \dot \rho_t = \begin{pmatrix}x_{11}(t) & x_{12}(t) & 0 \\ x_{21}(t) & x_{22}(t) & 0 \\ 0& 0 &0  \end{pmatrix}$$
 On the support of $\rho$, the tangent vectors are giving by replacer channels $\frac{1}{\varepsilon_t}(\mc R_{\rho_t + \varepsilon_t x_{11}(t)} - \id)$. Outside the support of $\rho$ we can generate the tangent vectors by two jump operators: $$a_1(t) = \frac{1}{\sqrt{2}} x_{22}(t)\rho_{11}^{-1/2}(t) \qquad a_2(t) = -a_{1}(t)^{-1} \rho_{11}^{-1}(t) x_{12}(t) $$
If $\rho_t$ does not change rank such a path is continuous.

When $\rho_t$ changes rank we have to substitute the Lindbladian by requiring $\tilde L_t(\rho_t) = L_t(\rho_t)$ and $\tilde L_t$ continuous at $t \in \rankshift(\rho_t)$. To work up to the full result, let's first assume that there exists a single $t_0 \in \rankshift(\rho_t)$. Write $E_{\rho_t}$ for the projection onto the linear subspace of operators generated by $\rho_t$. Note that $E_{\rho_t}$ is continuous by assumption. We then define $$\tilde L_t(\rho_t) = L_t \circ E_{\rho_t} + L_{t_0} \circ (\id - E_{\rho_t}).$$ By assumption, $L_t$ is continuous when applied to $\rho_t$, and $L_{t_0}$ is continuous since it is constant. \\
Now we take the general case where $\rankshift(\rho_t)$ has no accumulation point. By Bolzano-Weierstrass, $\rho_t$ must be finite. For any two $t_i, t_{i+1}$ we pick $s,r$ so that $t_i < s < r < t_{i+1}$ and on this interval we take the interpolation $$\tilde L_t(\rho_t) = L_t \circ E_{\rho_t} + \lpara\frac{r-t}{r-s}L_{t_i} + \frac{t-s}{r-s} L_{t_{i+1}}\rpara \circ (\id - E_{\rho_t}).$$ On $[t_i, s)$ and $(r,t_{i+1}]$ we can then force $\tilde L_t(\rho_t)$ to be continuous. $\tilde L_t$ is then a continuous lift for $\rho_t$. 
\end{proof} 

\begin{corollary}
If $\rho_t$ has no non-zero repeated eigenvalues and is $C^k$, then it has a $C^k$ parametric lift.
\end{corollary}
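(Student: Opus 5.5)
The plan is to reduce the corollary to the constant-rank case of Theorem~\ref{cont_para}. The one new ingredient is a $C^k$ diagonalization, which simple nonzero spectrum supplies.

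\emph{Step 1: a $C^k$ frame.} Since the nonzero eigenvalues of $\rho_t$ are pairwise distinct, each nonzero eigenvalue $\lambda_j(t)$ is locally a simple root of the characteristic polynomial. By the implicit function theorem, $\lambda_j$ is then $C^k$. The rank-one spectral projection $P_j(t)=\frac{1}{2\pi i}\oint (z-\rho_t)^{-1}dz$, taken over a small circle around $\lambda_j(t)$, is also $C^k$. The kernel projection $P_0(t)=\id-\sum_j P_j(t)$ is $C^k$ wherever the number of nonzero eigenvalues is locally constant.

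From these projections we build $U_t$ by parallel transport. We solve $\dot U_t = \big(\sum_j \dot P_j P_j\big)U_t$, the Kato adiabatic generator, which is $C^{k-1}$. Its solution $U_t$ is $C^k$, and it conjugates $P_j(t_0)$ to $P_j(t)$. Hence $U_t^*\rho_tU_t$ is diagonal with $C^k$ entries, and eigenspaces depend $C^k$-continuously on $t$, so hypothesis (1) holds.

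\emph{Step 2: rank changes.} A rank change can only happen when some $\lambda_j(t)\to 0$. Near a time with rank $r$, the $r$ nonzero eigenvalues are separated from $0$ and from each other. Rank can therefore only drop at isolated zeros of $\lambda_j$, or the rank is locally constant; I will argue this gives hypothesis (2). This is the step I expect to be the main obstacle. A $C^k$ function $\lambda_j$ may have zeros accumulating (e.g.\ $e^{-1/t}\sin(1/t)$-type behaviour), and a vanishing eigenvalue meeting $0$ is exactly a "repeated zero" situation that the hypothesis does not exclude. So I will either restrict to the generic situation, or treat an accumulation point $t^*$ directly. At such a $t^*$ one has $\lambda_j(t^*)=\dot\lambda_j(t^*)=0$, since $\lambda_j\ge0$ attains a minimum there. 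At a rank-drop time the velocity therefore lies in $T^+_{\rho}\mc D(\mc H)$ with no off-support component in the eigenbasis, and the replacer-type formula remains bounded there.

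\emph{Step 3: assemble and check regularity.} I apply the construction in the proof of Theorem~\ref{cont_para} in the moving frame. On the support I use the replacer Lindbladian $\varepsilon_t^{-1}(\mc R_{\rho_t+\varepsilon_t x_{11}(t)}-\id)$. Off the support I use the jump operators $a_1,a_2$ built from $x_{12},x_{22},\rho_{11}^{-1/2}$. The interpolation $\tilde L_t$ across the finitely many rank-shift times then produces a global lift.

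The membership $\dot\rho_t\in T^+_{\rho_t}\mc D(\mc H)$ is automatic for a path that stays in $\mc D(\mc H)$, by the tangent cone characterization. All ingredients are $C^k$: $U_t$, $\rho_{11}^{\pm1/2}$ on intervals of constant rank, and the $C^\infty$ interpolation weights, which I will take smooth instead of affine. Conjugating back by $U_t$, $\tilde L_t=\frac{d}{dt}(\mathrm{Ad}_{U_t}T_t)\,(\mathrm{Ad}_{U_t}T_t)^{-1}$ adds a Hamiltonian term $-i[H_t,\cdot]$ with $H_t=i\dot U_tU_t^*$, which is $C^{k-1}$. Recovering full $C^k$ would require choosing $U_t$ one degree smoother, which fails in general, so I expect the honest conclusion to lose one derivative unless the frame can be avoided. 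The way to avoid it is to write the off-support jump operators directly in terms of $P_0(t)$ and $\rho_t$.
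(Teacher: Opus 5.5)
You follow the paper's route: a $C^k$ diagonalizing frame from Kato--Lax, then the construction of Theorem~\ref{cont_para}. You also correctly flag the two points the paper's proof passes over. But neither of your repairs works, and neither point can be repaired, because the statement is false as written.

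\emph{Regularity.} The lost derivative is not an artifact of the frame. If $L_t$ and $\rho_t$ are both $C^k$, then $\dot\rho_t=L_t(\rho_t)$ is $C^k$, so $\rho_t$ would have to be $C^{k+1}$. For example, take finite $k$ and $\rho_t=\diag(\tfrac34+g(t),\tfrac14-g(t))$ on $[-1,1)$ with $g(t)=|t|^{k+1/2}/100$. This path has full rank and simple spectrum, yet it has no $C^k$ lift at all. In your Step 3, the entries $x_{11},x_{12},x_{22}$ of $\dot\rho_t$ are only $C^{k-1}$. So ``all ingredients are $C^k$'' is wrong, as is the paper's assertion that $a_1,a_2$ are $C^k$. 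Expressing the jump operators through $P_0(t)$ cannot help, and $C^{k-1}$ is the best possible.

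\emph{Rank drops.} Your Step 2 claim that the replacer-type formula stays bounded at a rank-drop time because $\dot\lambda_j(t^*)=0$ is false. The issue is not accumulation: a single rank drop already rules out every locally bounded lift. Write $L$ in GKSL form with traceless jump operators, so that $G=\sum_j a_j^*a_j$ satisfies $\norm{G}\le C_d\norm{L}$. Suppose $\rho v=\lambda v$ with $\norm{v}=1$. In $\langle v,L(\rho)v\rangle$ the Hamiltonian part contributes $0$, each $a_j\rho a_j^*$ contributes something nonnegative, and the anticommutator contributes $-\lambda\langle v,Gv\rangle$. Hence $\langle v,L(\rho)v\rangle\ge-\lambda\norm{G}$.

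Now take $\rho_t=\diag(1-t^2/4,\,t^2/4)$ on $[-1,1)$. It is analytic, has simple nonzero spectrum and a fixed eigenbasis, changes rank once, and has $\dot\rho_0=0$. Taking $v=e_2$ gives $\norm{G_t}\ge 2/|t|\to\infty$ as $t\to0^-$. In general $\frac{d}{dt}\log\lambda_j\ge-\norm{G_t}$ while $\log\lambda_j\to-\infty$, so $\int^{t^*}\norm{G_u}\,du=\infty$ however flat the contact is. More globally, variation of constants gives $\rho_t\ge W_{t,s}\rho_sW_{t,s}^*$, where $\partial_tW_{t,s}=(-iH_t-\tfrac12G_t)W_{t,s}$ and $W_{t,s}$ is invertible. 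So the rank can never decrease along a trajectory of a locally integrable Lindbladian family. For the replacer you can see this directly: $\rho_t+\varepsilon_t\dot\rho_t\ge0$ forces $\varepsilon_t\le\lambda_j/|\dot\lambda_j|$, which equals $|t|/2$ in the example.

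The interpolation of Theorem~\ref{cont_para} is no rescue. The obstruction applies to every lift, and that interpolation only alters $L_t$ off the span of $\rho_t$.

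\emph{Eigenframes.} Your Step 1 conclusion that hypothesis (1) holds is valid only on constant-rank intervals. A simple eigenvalue merging into the repeated eigenvalue $0$ can have eigenvectors with no limit. An example is a block $e^{-1/t^2}v_tv_t^*$ with $v_t=(\cos(1/t),\sin(1/t))$. So Kato--Lax supply no continuous frame across such a time.

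What your argument, like the paper's, actually proves is this: on an interval where $\rho_t$ has constant rank, a $C^k$ path with simple nonzero spectrum admits a $C^{k-1}$ lift.
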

\begin{proof}
 Because $\rho_t$ is positive and differentiable and has no non-zero repeated eigenvalues it can be diagonalized as $U_t \rho_t U^*_t$ with continuous unitaries $U_t$ \cite{lax2007, kato2013}. If $\rho_t$ is $C^k$ and has no-nonzero repeated eigenvalues it can be diagonalized by a $C^k$ unitary $U_t$. Since $\rho_{11}(t)$ is strictly positive, $a_1(t)$ and $a_2(t)$ are also $C^k$, and so the lift $L_t$ is $C^k$.
\end{proof}

\begin{corollary}
If $\rho_t$ is analytic, then it has an analytic parametric lift.
\end{corollary}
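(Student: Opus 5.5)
We follow the proof of the previous corollary, replacing $C^k$ regularity by real analyticity at every step, and use the rigidity of analytic functions to deal with rank changes. We first diagonalize analytically. By Rellich's theorem (see \cite{kato2013}), a real-analytic path of Hermitian matrices $\rho_t$ has analytic eigenvalue branches $\lambda_j(t)$ and an analytic family of unitaries $U_t$ with $U_t\rho_tU_t^*=\diag(\lambda_1(t),\dots,\lambda_d(t))$. This holds on the whole interval, even through eigenvalue crossings. As in Theorem \ref{cont_para}, conjugation by the analytic $U_t$ only adds the analytic Hamiltonian term $-i[\dot U_tU_t^*,\cdot\,]$. So it suffices to lift the diagonal path.

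Next we show that rank changes are isolated and controlled. Each $\lambda_j$ is analytic, so it is either identically zero or has isolated zeros of finite order. Hence $\rankshift(\rho_t)$ has no accumulation point, and condition (2) holds automatically. At a zero $s$ of $\lambda_j$, positivity forces the order to be even, say $2m\ge 2$. Therefore $\dot\lambda_j(t)/\lambda_j(t)$ has at worst a simple pole at $s$, and $\dot\lambda_j(t)/\sqrt{\lambda_j(t)}$ is analytic near $s$. To see the latter, write $\lambda_j=(t-s)^{2m}g(t)$ with $g>0$ analytic; then $\sqrt{\lambda_j}=(t-s)^m\sqrt g$ is analytic.

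We then build the lift in the diagonal frame using analytic ingredients only. For the diagonal velocities we use the classical rate equation between levels: we choose jump operators $\sqrt{c_{jk}(t)}\,\ket k\bra j$ with coefficients $c_{jk}(t)$. The fluxes $c_{jk}\lambda_j$ must reproduce $\dot\lambda_k$. We would normalize the jump operators by $\sqrt{\lambda_j}$, in the spirit of the operators $a_1(t)=\tfrac1{\sqrt2}x_{22}\rho_{11}^{-1/2}$ of Theorem \ref{cont_para}. The point is that $\dot\lambda_k/\sqrt{\lambda_j}$ is analytic by the previous step. The off-diagonal velocities $x_{jk}$ are handled the same way, either by a Hamiltonian term with analytic entries $i\,x_{jk}/(\lambda_j-\lambda_k)$ where eigenvalues are separated, or by jump operators of the form $x_{jk}\lambda_j^{-1/2}$. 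The key algebraic identity behind this step is that $L_a(\rho)$ with $a=A\rho^{-1/2}$ reproduces any prescribed velocity that is compatible with the tangent cone characterization \eqref{tangent}. All coefficients are then quotients of analytic functions whose denominators vanish to no higher order than the numerators. By the removable singularity theorem they extend analytically across the rank-shift points. Consequently no interpolation step of the kind used in Theorem \ref{cont_para} is needed; that step would destroy analyticity.

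The main obstacle is showing that these quotients really have removable singularities. This requires the cancellation of orders: near a zero of $\lambda_j$ of order $2m$, the entries $x_{jk}$ of $\dot\rho$ coupling to level $j$ must vanish to order at least $m$. We would derive this from positivity: every $2\times2$ principal minor of $\rho_t$ is nonnegative, so $|\rho_{jk}|^2\le\lambda_j\lambda_k$ in a frame adapted to the eigenbranches. This has to be combined carefully with the Rellich frame, because the frame itself rotates analytically. A further difficulty arises at crossings of nonzero eigenvalues, where $\lambda_j-\lambda_k$ vanishes. There we would instead absorb the off-diagonal part into a replacer-type term $\varepsilon^{-1}(\mc R_{\rho_t+\varepsilon\dot\rho_t}-\id)$ with a constant $\varepsilon$. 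This is legitimate on the full-rank block, since $\rho_t+\varepsilon\dot\rho_t\ge0$ for small $\varepsilon$ locally. Compactness of the relevant closed subintervals should allow a single $\varepsilon$, but that is the step I would check first.
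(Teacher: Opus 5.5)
The step that fails is your claim that every coefficient has a removable singularity at the rank-shift points. The trouble lies in the diagonal rate equation, not in the off-diagonal entries you flagged: after your Rellich reduction, $\dot D_t=\diag(\dot\lambda_j(t))$ has no off-diagonal part, and crossings are already absorbed by $U_t$.

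For any Lindbladian $L$ and unit vector $e$ with $\rho e=\lambda e$, write $\rho=\lambda\ket{e}\bra{e}+\rho'$ with $\rho'\ge 0$ and $\rho' e=0$. In $\bra{e}L(\rho')\ket{e}$ the Hamiltonian and anticommutator terms vanish, and each jump term $\bra{e}a\rho' a^\dagger\ket{e}$ is nonnegative. Hence $\bra{e}L(\rho)\ket{e}\ge-\lambda\norm{L}$, with $\norm{L}$ the norm of $L$ on $\mb B(\mc H)$. Along a Rellich branch $\dot\lambda_j=\bra{e_j}\dot\rho_t\ket{e_j}$, so every lift must satisfy
\[
\norm{L_t}\ \ge\ -\frac{\dot\lambda_j(t)}{\lambda_j(t)}.
\]
You computed yourself that near an interior zero $s$ of order $2m$ the right-hand side is $2m/(s-t)+O(1)$. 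That simple pole is not removable. It forces $\norm{L_t}\to\infty$ as $t\uparrow s$, and even $\int\norm{L_t}\,dt$ diverges; this is the familiar fact that a Markovian evolution cannot lower the rank in finite time.

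Normalizing by $\sqrt{\lambda_j}$ cannot help. The only term that depletes level $j$ is $-\tfrac12\{a^\dagger a,\rho\}$, which contributes $-\lambda_j\norm{a e_j}^2$. With $a=A\rho^{-1/2}$, it is the jump operator itself that blows up.

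Hence the corollary is false as stated. The path $\rho_t=\diag(1-t^2,t^2)$ on $[-\tfrac12,\tfrac12)$ is analytic and meets every hypothesis of Theorem \ref{cont_para}, yet any lift has $\norm{L_t}\ge 2/|t|$ for $t<0$. The paper's own path $\tfrac12\begin{pmatrix}1&\sin t\\ \sin t&1\end{pmatrix}$ fails in the same way as $t\uparrow\pi/2$.

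You are right that the interpolation in Theorem \ref{cont_para} destroys analyticity. So the paper's own two-line argument (analytic $U_t$, isolated rank shifts, then that theorem) does not prove the claim either, and the estimate above shows its rank-shift step cannot work whenever the rank drops. What is missing is an additional hypothesis. For example, require that no eigenvalue branch vanishes at an interior point of $I$ unless it vanishes identically, so the rank can change only at $t_0$.

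Under that hypothesis your Rellich reduction does go through, and more simply than you planned. The diagonal problem becomes a classical rate equation on the support. It is solved by $\varepsilon_t^{-1}(\mc R_{D_t+\varepsilon_t\dot D_t}-\id)$ with an analytic $\varepsilon_t>0$ chosen small enough. Zeros at $t_0$, which may have odd order, are harmless because $\dot\lambda_j\ge 0$ just after them.
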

\begin{proof}
If $\rho_t$ is analytic then $U_t$ above is analytic, as are $a_1(t)$ and $a_2(t)$. The analyticity of $U_t$ prevents $\rankshift(\rho_t)$ from having accumulation points, so that the lift $\tilde L_t$ is analytic.
\end{proof}
In the full-rank case we do not have to worry about the shifting support of $\rho_t$. This eliminates the need for a continuous unitary diagonalization which greatly simplifies our necessary assumptions. 
\begin{corollary}\label{rank_lift}
If a $C^2(I, \mc D(H))$ path $\rho_t$ is of full rank for all $t \in I$ then it admits a continuous parametric lift.
\end{corollary}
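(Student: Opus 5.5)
The plan is to write down an explicit replacer Lindbladian, as in the main text, and check directly that it is a continuous lift. For $t\in I$ set
$$L_t=\frac{1}{\varepsilon}\lpara \mc R_{\rho_t+\varepsilon\dot\rho_t}-\id\rpara,$$
with $\varepsilon=\varepsilon_t>0$ still to be chosen. Since $\tr\dot\rho_t=0$ (because $\tr\rho_t\equiv1$), the operator $\sigma_t:=\rho_t+\varepsilon_t\dot\rho_t$ has trace one. Whenever $\sigma_t\ge 0$, the map $L_t=\varepsilon_t^{-1}(\mc R_{\sigma_t}-\id)$ is a Lindbladian: it generates the semigroup $e^{sL_t}=e^{-s/\varepsilon_t}\id+(1-e^{-s/\varepsilon_t})\mc R_{\sigma_t}$, and each member of this semigroup is CPTP. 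Equivalently, $\mc R_\sigma(\rho)=\sum_{jk}\lambda_j \ket{j}\bra{k}\rho\ket{k}\bra{j}$ with $\sigma=\sum_j\lambda_j\ket j\bra j$, so $\mc R_\sigma-\id$ is of GKSL form. The lift equation then holds identically:
$$L_t(\rho_t)=\varepsilon_t^{-1}\lpara(\rho_t+\varepsilon_t\dot\rho_t)-\rho_t\rpara=\dot\rho_t.$$
This is the same mechanism used in Example \ref{moll} and in the replacer lifts of the main text.

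So the real task is to choose $\varepsilon_t$ continuous and strictly positive with $\rho_t+\varepsilon_t\dot\rho_t\ge0$. Full rank enters here. Let $\lambda_{\min}(t)$ be the smallest eigenvalue of $\rho_t$; it is continuous in $t$ (Weyl's inequality) and strictly positive on $I$. Likewise $\norm{\dot\rho_t}_\infty$ is continuous, since $\rho\in C^2\subset C^1$. Now put
$$\varepsilon_t=\frac{\lambda_{\min}(t)}{1+\norm{\dot\rho_t}_\infty}.$$
Then $\rho_t+\varepsilon_t\dot\rho_t\ge (\lambda_{\min}(t)-\varepsilon_t\norm{\dot\rho_t}_\infty)\,\id\ge0$, and $\varepsilon_t$ is continuous and positive. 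Because $\varepsilon_t^{-1}$ and $\sigma_t$ are both continuous, $t\mapsto L_t$ is continuous as a map into superoperators.

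I expect the main obstacle to be the behaviour at the open endpoint of $I$, or on a non-compact $I$, where $\lambda_{\min}(t)$ may tend to $0$. There $\varepsilon_t\to0$ and $L_t$ may blow up. This is unavoidable in general, as the example $\diag(1-t,t)$ on $[0,1)$ shows. Continuity on $I$ itself is unaffected, since $\varepsilon_t>0$ at every point of $I$, and this is all the statement requires. If boundedness were wanted, I would restrict to compact subintervals, where $\min\lambda_{\min}>0$ gives a uniform $\varepsilon$, or I would pass to a geometric lift by reparametrizing as in Example \ref{moll}. Note also that no continuous diagonalization is needed: the support of $\rho_t$ is the whole space throughout, so the support-changing jump-operator construction of Theorem \ref{cont_para} is bypassed entirely.
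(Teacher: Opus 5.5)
Your proposal is correct and follows the paper's proof: both use the replacer Lindbladian $\varepsilon_t^{-1}(\mc R_{\rho_t+\varepsilon_t\dot\rho_t}-\id)$ with a continuous positive $\varepsilon_t$ chosen so that $\rho_t+\varepsilon_t\dot\rho_t\ge 0$. You go beyond the paper by giving an explicit continuous choice, $\varepsilon_t=\lambda_{\min}(t)/(1+\norm{\dot\rho_t}_\infty)$, and by checking that $\mc R_\sigma-\id$ is of GKSL form; the paper only asserts both of these facts.
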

\begin{proof}
For any $t>0$ there exists a $\varepsilon_t > 0$ with the property $\rho_t + \varepsilon_t \dot\rho_t > 0$. Moreover, this $\varepsilon_t$  can be chosen continuously. $L_t = \frac{1}{\varepsilon_t} (\mc R_{\rho_t + \varepsilon_t \dot \rho_t} - \id)$ is then a Lindbladian lift. 
\end{proof}

\end{document}